\newtheorem{prop}{Proposition}
\newtheorem{obs}{Observation}
\newcommand{\gp}{\gamma_{\rm pr}}
\newcommand{\probl}[3]{
\vspace{-5mm}
\begin{flushleft}
\fbox{
\begin{minipage}{115mm}
\noindent {\sc #1}\\
          {\bf Input:} #2\\
          {\bf Output:} #3
\end{minipage}}
\medskip
\end{flushleft}
\vspace{-5mm}
}
\title{Paired Domination versus Domination and Packing Number in Graphs\thanks{This work is supported by the Scientific and Technological Research Council of Turkey (TUBITAK) under grant no.118E799.}
}
\author{Magda Dettlaff~\inst{1}, Didem G\"{o}z\"{u}pek~\inst{2},  and Joanna Raczek~\inst{3}
}
\institute{Faculty of Applied Physics and Mathematics, Gdansk University of Technology, Poland\\
\email{magda.dettlaff1@pg.edu.pl}
\and
Department of Computer Engineering, Gebze Technical University, Kocaeli, Turkey\\
\email{didem.gozupek@gtu.edu.tr}
\and
Faculty of Electronics, Telecommunications and Informatics, Gdansk University of Technology, Poland\\
\email{joanna.raczek@pg.edu.pl}
}
\begin{document}

\maketitle
\begin{abstract}
Given a graph $G=(V(G), E(G))$, the size of a minimum dominating set, minimum paired dominating set, and a minimum total dominating set of a graph $G$ are denoted by $\gamma(G)$, $\gamma_{\rm pr}(G)$, and $\gamma_{t}(G)$, respectively. For a positive integer $k$, a \emph{k-packing} in $G$ is a set $S \subseteq V(G)$ such that for every pair of distinct vertices $u$ and $v$ in $S$, the distance between $u$ and $v$ is at least $k+1$. The \emph{k-packing number} is the order of a largest $k$-packing and is denoted by $\rho_{k}(G)$. It is well known that $\gamma_{\rm pr}(G) \le 2\gamma(G)$. In this paper, we prove that it is NP-hard to determine whether  $\gamma_{\rm pr}(G) = 2\gamma(G)$ even for bipartite graphs. We provide a simple characterization of trees with $\gamma_{\rm pr}(G) = 2\gamma(G)$, implying a polynomial-time recognition algorithm. We also prove that even for a bipartite graph, it is NP-hard to determine whether $\gamma_{\rm pr}(G)=\gamma_{t}(G)$. We finally prove that it is both NP-hard to determine whether $\gamma_{\rm pr}(G)=2\rho_{4}(G)$ and whether $\gamma_{\rm pr}(G)=2\rho_{3}(G)$.\\

\noindent\textbf{Keywords:} Graph theory, domination, paired domination, total domination, packing number.
\end{abstract}

\section{Introduction}\label{intro}
The notion of paired domination was introduced by Haynes and Slater  1998 (see \cite{HSlater}) and is now widely studied in graph theory. A set $D\subseteq V(G)$ is a \emph{dominating set} of a graph $G$ if every vertex in $V(G)-D$ is adjacent to at least one vertex in $D$. The \emph{domination number} $\gamma(G)$ of a graph $G$ is the minimum cardinality of a dominating set of $G$. A set $D\subseteq V(G)$ is a \emph{paired dominating set} of a graph $G=(V,E)$ if it is a dominating set and the induced subgraph $\langle D\rangle$ has a perfect matching. The \emph{paired domination number} $\gp(G)$ is the cardinality of a smallest paired dominating set of $G$. If we think of each $u\in V(G)$ as the location of a guard capable of protecting each vertex in $N_G[u]$, then for paired domination we require the guards' locations to be selected as adjacent pairs of vertices so that every vertex of $G$ is protected and each guard is assigned to one another so that they can cooperate as backups for each other. A set $D\subseteq V(G)$ is a \emph{total dominating set} of a graph $G$ if every vertex in $V(G)$ is adjacent to at least one vertex in $D$. The \emph{total domination number} $\gamma_t(G)$ of a graph $G$ is the minimum cardinality of a total dominating set of $G$. It is shown in \cite{HHS98} that for any graph $G$ without isolated vertices, 

\begin{equation}\label{e1}
\gamma(G)\leq \gamma_t(G)\leq \gamma_{\rm pr}(G)\leq 2\gamma(G).
\end{equation}
Many authors have hitherto studied various aspects of this inequality chain. Alvarado et.al. \cite{AlvaroRautenbach} proved that it is NP-hard to decide for a given graph $G$, whether $\gamma(G)=\gamma_t (G)$, as well as it is both NP-hard to decide for a given graph $G$, whether $\gamma(G)=\gamma_{\rm pr} (G)$ and whether $\gamma_{\rm pr}(G)=2\gamma (G)$. In this paper, we complement these results by proving that it is NP-hard to decide for a given graph $G$, whether $\gamma_{\rm pr}(G)=2\gamma (G)$ even for bipartite graphs. 

Whenever two graph parameters, say $\eta(G)$ and $\tau(G)$, are related by an inequality $\eta(G) \leq\tau (G)$ for every graph $G$, it makes sense to study the class of the so-called $(\eta, \tau )$--\emph{graphs}, the graphs for which $\eta(G) = \tau (G)$. Shang, Kang and Henning have characterized the class of trees $T$ achieving $\gamma_t(T)=\gamma_{\rm pr}(T)$, see~\cite{ShangKangHenning}. In Section~\ref{NP} we prove that it is NP-hard to decide for a given graph $G$, whether $\gamma_t(G)=\gamma_{\rm pr}(G)$ even for bipartite graphs. 

Henning and Vestergaard in~\cite{HenningV} gave a constructive characterization of $(\gamma_{\rm pr},2\gamma)$-trees by showing eleven operations which are needed to construct such trees. Similarly, Hou \cite{Hou} gave a simpler characterization of such trees using three operations.  In Section~\ref{trees}, we provide a much simpler and straightforward characterization of those trees, not requiring any operations. Our characterization checks in polynomial time whether a tree $T$ is a $(\gamma_{\rm pr},2\gamma)$-tree and therefore can be used as a basis for characterizing more complex classes of $(\gamma_{\rm pr},2\gamma)$-graphs. 

For a positive integer $k$, a {\em k--packing} in $G$ is a set $A\subseteq V(G)$ such that for every pair of distinct vertices $u$ and $v$ in $A$, the distance between $u$ and $v$ in $G$ is at least $k+1$. The {\em k--packing number} is the order of a largest $k$--packing of $G$ and is denoted by $\rho_k(G)$. A 1--packing set is an \emph{independent set}, denoted also by $i(G)$. It is an easy observation, that for $k\geq 1$ and for any graph $G$, $\rho_k(G)\geq \rho_{k+1}(G)$. Bresar \cite{Bresar} proved that for any graph $G$ without isolated vertices, $\gamma_{\rm pr}(G)\geq 2\rho_3(G)$ and for every nontrivial tree $T$, $\gamma_{\rm pr}(T)= 2\rho_3(T)$. In Section~\ref{rho} we prove that it is NP-hard to decide for a given graph $G$, whether $\gp(G)=2\rho_3 (G)$, as well as it is NP-hard to decide for a given graph $G$, whether $\gp(G)=2\rho_4 (G)$.

\section{Complexity results for paired domination in bipartite graphs}\label{NP}
We start this section by  presenting the 3SAT problem.\\

\probl
{$3$-Satisfiability ($3$-SAT)}
{A boolean expression $E$ in conjunctive normal form (CNF), that is, the
conjunction of clauses, each of which is the disjunction of three distinct literals.}
{Is $E$ satisfiable?}


Now we focus on the computational complexity of the problem of determining whether total domination and paired domination numbers are equal in bipartite graphs.
\begin{theorem}\label{NPP}
It is NP-hard to determine whether $\gamma_{t}(G)=\gamma_{\rm pr}(G)$ even for a bipartite graph $G$.
\end{theorem}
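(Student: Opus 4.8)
The plan is to reduce $3$-SAT to the problem of deciding whether $\gamma_t(G) = \gp(G)$, producing in each case a \emph{bipartite} graph $G = G(E)$. Since $\gamma_t(G) \le \gp(G)$ holds for every graph without isolated vertices by~\eqref{e1}, the whole burden is to engineer $G$ so that the matching requirement hidden in paired domination --- but absent from total domination --- can be met at no extra cost exactly when $E$ is satisfiable. Concretely, I would aim for a construction in which $\gamma_t(G)$ equals a fixed value $t$ depending only on the number $n$ of variables and the number $m$ of clauses (so that the total domination number carries no information about satisfiability), while $\gp(G) = t$ if and only if $E$ is satisfiable, and $\gp(G) > t$ otherwise. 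Deciding $\gamma_t(G) = \gp(G)$ would then decide satisfiability, giving NP-hardness.

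First I would build the gadgets, keeping the incidence graph bipartite. For each variable $x_i$ I would introduce two literal vertices $t_i$ and $f_i$ together with a small \emph{selector} attached through pendant paths of length two, whose role is twofold: it forces a fixed number of vertices into every minimum total dominating set, and it makes a minimum total dominating set choose exactly one of $t_i, f_i$ to be ``active'', in such a way that the chosen literal can always be matched inside the variable gadget. For each clause $C_j$ I would attach a gadget with a central vertex $c_j$ adjacent to the three literal vertices occurring in $C_j$ and anchored by a forced substructure (again via pendant paths). The key design goal here is the contrast between the two parameters: the clause gadget should be totally dominated at a fixed local cost no matter how the literals are set --- so that $\gamma_t$ is insensitive to the assignment --- but its forced vertices should induce, by themselves, an odd ``star-like'' configuration with no perfect matching, which can be repaired at the same cardinality only by borrowing an already-selected literal neighbour of $c_j$ as a matching partner.

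With the gadgets in place, the correctness proof splits into the two standard directions. For the easy direction, given a satisfying assignment I would exhibit a paired dominating set of size exactly $t$: activate the chosen literal of each variable (matched internally), and in each clause gadget use a true literal incident to $c_j$ to complete the matching of the forced vertices; one checks this set dominates $G$ and induces a perfect matching, so $\gp(G) \le t$, and together with $\gp(G) \ge \gamma_t(G) = t$ this forces equality. For the converse I would argue the contrapositive: if $E$ is unsatisfiable then every set of size $t$ that totally dominates $G$ corresponds (after a rigidity analysis of the forced pendant structures) to a truth assignment, and since that assignment leaves some clause $C_j$ with no active literal, the forced vertices of that clause gadget cannot be matched without adding a vertex outside a minimum-size set; hence no paired dominating set attains size $t$ and $\gp(G) > t$.

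I expect the main obstacle to be the converse direction, specifically the \emph{rigidity} lemma: one must show that a minimum (or near-minimum) total dominating set is essentially unique up to the binary literal choices, so that a paired dominating set of size $t$ genuinely encodes a consistent assignment and cannot sidestep an unsatisfied clause by some unforeseen global reconfiguration. Designing the pendant-path anchors strong enough to pin down the forced vertices, yet weak enough to leave the intended literal freedom, while simultaneously keeping $G$ bipartite and keeping $\gamma_t(G)$ exactly equal to $t$ independently of the assignment, is the delicate part; the parity and matching bookkeeping at the clause centers is routine once the rigidity is established.
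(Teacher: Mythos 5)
Your proposal follows the same general template as the paper's proof (a reduction from 3SAT with bipartite variable and clause gadgets anchored by pendant paths, where equality of the two parameters tracks satisfiability), but the specific mechanism you choose to encode satisfiability is flawed, and it is exactly the point where the paper does something different. You want the forced vertices of each clause gadget to form an odd configuration that is ``repaired at the same cardinality only by borrowing an already-selected literal neighbour of $c_j$ as a matching partner,'' while in your forward direction the chosen literal of each variable is ``matched internally.'' These two requirements are incompatible: in a paired dominating set every vertex has exactly one matching partner, so a literal vertex matched inside its variable gadget cannot also serve as the partner of a clause vertex. Even if you redesign the variable gadget so that the true literal is left free to pair externally, one literal vertex can repair at most one clause gadget, so your construction would require not merely a satisfying assignment but an injective assignment of a \emph{distinct} true literal to each clause (a system of distinct representatives). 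Satisfiability does not imply this; by pigeonhole it is impossible whenever the number of clauses exceeds the number of variables, e.g.\ the satisfiable formula $(x\lor y\lor z)\wedge(x\lor y\lor\lnot z)\wedge(x\lor\lnot y\lor z)\wedge(x\lor\lnot y\lor\lnot z)$ has four clauses but any assignment makes only three literals true. Hence your forward direction fails: there are satisfiable instances on which the construction you describe would yield $\gp(G)>\gamma_t(G)$, so the reduction misclassifies them.

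The paper sidesteps this trap by making the satisfiability link a \emph{domination} constraint rather than a \emph{matching} constraint. Its clause gadget is a path $c_j^1c_j^2c_j^3c_j^4c_j^5$ whose contribution $\{c_j^2,c_j^3\}$ is paired internally; the only service a true literal vertex must provide is to dominate $c_j^5$, and domination, unlike matching, is not exclusive, so one true literal vertex can serve every clause in which it occurs. The pairing obstruction then appears on the variable side: when $E$ is unsatisfiable, every minimum total dominating set is forced to pick up a third vertex in some star gadget $G(u_i)$, and a $K_{1,3}$-type induced subgraph on three chosen vertices has no perfect matching, giving $\gp(G)>\gamma_t(G)$. (Note also that in the paper $\gamma_t(G)$ itself is assignment-sensitive, equal to $2m+2n$ exactly when $E$ is satisfiable, contrary to your design goal of keeping $\gamma_t$ constant.) Finally, even apart from the matching flaw, your text is a plan rather than a proof: no concrete gadgets are specified, and you yourself flag the rigidity lemma needed for the converse as unproven.
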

\begin{proof}
We describe a reduction from 3SAT, which was proven to be \emph{NP}-complete in \cite{GJ79}, to the considered problem. The formula in 3SAT is given in conjunctive normal form, where each clause contains three literals. We assume that the formula contains the instance of any literal $u$ and its negation $\neg u$ (in the other case all clauses containing the literal $u$ are satisfied by the true assignment of $u$). 

Given an instance $E$, the set of literals $U=\{u_1,u_2,\ldots,u_n\}$ and the set of clauses $C=\{c_1,c_2,\ldots,c_m\}$ of 3SAT, we construct a graph $G$ whose order is polynomially bounded in terms of $n$ and $m$ such that the formula is satisfiable if and only if $\gamma_{t}(G)=\gamma_{\rm pr}(G)$.

For each literal $u_i$ construct a copy of the graph $G(u_i)$ in Fig.~\ref{f4}.


\begin{figure}[h!]
\begin{center}
\begin{tikzpicture}[scale=0.7]

\draw (0,0) -- (0,4);
\draw (0,2) -- (2,2);

\shade[ball color=black] (0,4) circle (3pt)node[left] {$u_i$} ;
\shade[ball color=black] (2,2) circle (3pt)node[below] {$a_i$} ;
\shade[ball color=black] (0,2) circle (3pt)node[left] {$b_i$};
\shade[ball color=black] (0,0) circle (3pt) node[left] {$u'_i$};
\end{tikzpicture}
\end{center}
\caption{The graph $G(u_i)$}\label{f4}
\end{figure}
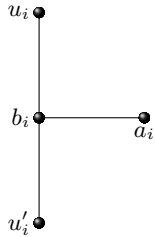

For every clause $C_j$, we create a copy of $G(C_j)$ of the graph in Fig.~\ref{f5}. All graphs $G(u_i)$ and $G(C_j)$ created so far are trees and they are disjoint. 

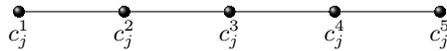
\begin{figure}[h!]
\begin{center}
\begin{tikzpicture}[scale=0.7]

\draw (0,0) -- (8,0);

\shade[ball color=black] (8,0) circle (3pt)node[below] {$c_j^5$} ;
\shade[ball color=black] (6,0) circle (3pt)node[below] {$c_j^4$} ;
\shade[ball color=black] (4,0) circle (3pt)node[below] {$c_j^3$} ;
\shade[ball color=black] (2,0) circle (3pt)node[below] {$c_j^2$};
\shade[ball color=black] (0,0) circle (3pt) node[below] {$c_j^1$};
\end{tikzpicture}
\end{center}
\caption{The graph $G(C_j)$}\label{f5}
\end{figure}

For every clause $C_j$ with literals $x, y$ and $z$, we create the three edges $c_j^5x, c_j^5y$ and $c_j^5z$. If, for example, $C_1 = \lnot u_1 \lor u_2 \lor  u_3$, then these edges are $c_1^5u'_1, c_1^5u_2$ and $c_1^5u_3$ as shown in Fig.~\ref{f6}. This completes the description of $G$. Observe that $G$ does not contain an odd cycle and therefore $G$ is bipartite.

\begin{figure}[h!]
\begin{center}
\begin{tikzpicture}[scale=0.4]

\draw (0,14) -- (0,18);
\draw (0,16) -- (2,16);

\draw (0,7) -- (0,11);
\draw (0,9) -- (2,9);

\draw (0,0) -- (0,4);
\draw (0,2) -- (2,2);

\draw(-16,9)--(-8,9)--(0,14);
\draw(0,11)--(-8,9)--(0,4);

\shade[ball color=black] (0,18) circle (5pt)node[right] {$u_1$} ;
\shade[ball color=black] (2,16) circle (5pt)node[below] {$a_1$} ;
\shade[ball color=black] (0,16) circle (5pt)node[left] {$b_1$};
\shade[ball color=black] (0,14) circle (5pt) node[right] {$u'_1$};

\shade[ball color=black] (0,11) circle (5pt)node[right] {$u_2$} ;
\shade[ball color=black] (2,9) circle (5pt)node[below] {$a_2$} ;
\shade[ball color=black] (0,9) circle (5pt)node[left] {$b_2$};
\shade[ball color=black] (0,7) circle (5pt) node[right] {$u'_2$};

\shade[ball color=black] (0,4) circle (5pt)node[right] {$u_3$} ;
\shade[ball color=black] (2,2) circle (5pt)node[below] {$a_3$} ;
\shade[ball color=black] (0,2) circle (5pt)node[left] {$b_3$};
\shade[ball color=black] (0,0) circle (5pt) node[right] {$u'_3$};

\shade[ball color=black] (-8,9) circle (5pt)node[below] {$c_1^5$} ;
\shade[ball color=black] (-10,9) circle (5pt)node[below] {$c_1^4$} ;
\shade[ball color=black] (-12,9) circle (5pt)node[below] {$c_1^3$} ;
\shade[ball color=black] (-14,9) circle (5pt)node[below] {$c_1^2$};
\shade[ball color=black] (-16,9) circle (5pt) node[below] {$c_1^1$};
\end{tikzpicture}
\end{center}
\caption{The edges between $G(C_1)$ and $G(u_1)\cup G(u_2)\cup G(u_3)$ for the clause $C_1 = \lnot u_1 \lor u_2 \lor u_3$}\label{f6}
\end{figure}
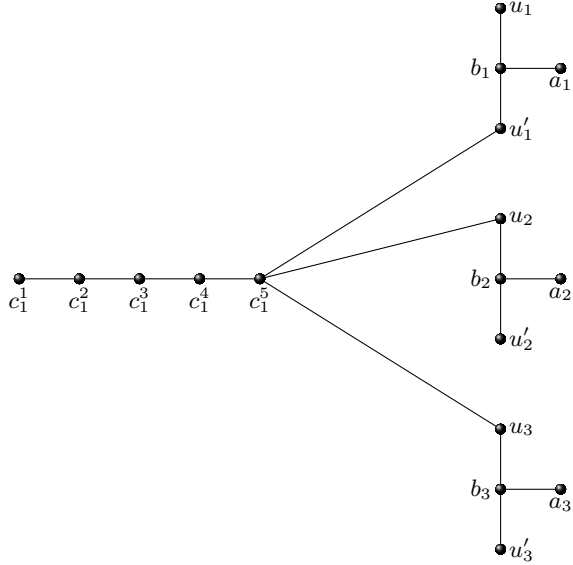

Since $c_j^2$ and $b_i$ are support vertices, they belong to every total dominating set of $G$.  However, since each of the two support vertices of $G$ are at distance at least~4, $2m+2n\leq\gamma_t(G)\leq \gamma_{\rm pr}(G)$.

First, we assume that $E$ is satisfiable and consider a satisfying truth assignment. Let $D$ be a total dominating set of $G$ such that it contains each vertex corresponding to true literal. Since $E$ is satisfiable, for each $G(C_j)$ the vertex $c_j^5$ is dominated by a vertex from $G(u_i)$. Let also $b_i, c_j^2$ and $c_j^3$ belong to $D$. This way, we construct a total dominating set $D$ of $G$ of cardinality $2m+2n$. Such a set is also a paired-dominating set of $G$, so we conclude that if $E$ is satisfiable, then $\gamma_t(G)= \gamma_{\rm pr}(G)$.

Next, we assume that $\gamma_t(G)=\gamma_{\rm pr}(G)$. Let $D$ be a minimum total dominating set of $G$. If the diameter of a graph is at least $3$, there exists a minimum total dominating set which contains no leaf. Hence, without loss of generality, we may assume that $c_j^2$ and $c_j^3$ belong to $D$. Moreover, if $|D|=2m+2n$, then from each $G(u_i)$ exactly two vertices belong to $D$: $b_i$ and either $u_i$ or $u'_i$ to dominate vertices $c_j^5$ for each $j=1,2,\dots,m$.  If the corresponding vertex of a literal belongs to $D$, then let us set the literal to true; otherwise, let us set it to false. Altogether, it follows that the truth assignment defined above satisfies $E$. Hence, assume $|D|>2m+2n$. Then no subset of $2m+2n$ vertices of $G$ totally dominates $G$, but for each $E$ there exists a subset of $2m+2n$ vertices that totally dominate $G-\{c_1^5,c_2^5,\dots,c_m^5\}$. Without loss of generality we may assume that only two vertices from each $G(C_j)$ belong to $D$, since otherwise we may exchange $c_j^4$ or $c_j^5$ with a vertex of $G(u_i)$ adjacent to $c_j^5$. However, if $D$ contains more than two vertices from a $G(u_i)$, then $D$ is not a (minimum) paired-dominating set and $\gamma_{\rm pr}(G)>|D|$, which is impossible. This completes the proof. 
\end{proof}

Our next result considers the computational complexity of the problem of determining whether paired domination number is twice the domination number in bipartite graphs.

\begin{theorem}\label{NPH}
For a given graph $G$, it is NP-hard to determine whether $\gamma_{\rm pr}(G)=2\gamma(G)$ even for bipartite graphs.
\end{theorem}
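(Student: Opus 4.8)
The plan is to give a polynomial reduction from 3-SAT to the problem of deciding whether $\gamma_{\rm pr}(G)=2\gamma(G)$, reusing the gadget philosophy of the previous theorem but now tracking the plain domination number rather than the total domination number. Since $\gamma_{\rm pr}(G)\le 2\gamma(G)$ always holds, I must engineer $G$ so that satisfiability of $E$ forces a small paired dominating set (one of size exactly $2\gamma(G)$) while non-satisfiability opens up a cheaper paired dominating set that beats $2\gamma(G)$, or equivalently forces $\gamma(G)$ to be larger than half the paired domination number. The cleanest route is to build a bipartite gadget for each literal and each clause, attach the clause gadgets to the literal vertices exactly as in the proof of Theorem~\ref{NPP} (the edges $c_j^5 x$, $c_j^5 y$, $c_j^5 z$), and choose the internal structure of each gadget so that its contribution to $\gamma(G)$ and to $\gamma_{\rm pr}(G)$ is pinned down by support vertices and leaves. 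Bipartiteness is guaranteed for free by keeping every created cycle even, exactly as the excerpt argues by the absence of odd cycles.

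The key steps, in order, are as follows. First I would fix per-gadget lower bounds: each literal gadget $G(u_i)$ and each clause gadget $G(C_j)$ should have a forced number of vertices in any dominating set (via leaves and their support vertices), giving a clean value $\gamma(G)=an+bm$ for explicit small constants $a,b$, independent of the assignment. Second, I would compute $\gamma_{\rm pr}(G)$ in the satisfiable case: a satisfying assignment lets me pick, in each literal gadget, the true-literal vertex together with its matched partner so that simultaneously every clause vertex $c_j^5$ gets dominated "for free," and I would verify this yields a paired dominating set of size exactly $2\gamma(G)=2(an+bm)$, establishing equality. Third, and this is the crux, I would show the converse: if $E$ is \emph{not} satisfiable, then every paired dominating set is strictly cheaper than $2\gamma(G)$ — more precisely, I would argue that the perfect-matching constraint can no longer be met "in sync" with domination at the forced cost, so either $\gamma(G)$ is genuinely larger than the naive count (making $2\gamma(G)$ jump above the paired number) or a short paired dominating set exists that does not arise from a consistent truth assignment. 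I would make this quantitative by analyzing, clause-gadget by clause-gadget, how an unsatisfied clause forces an extra dominator that a paired set can absorb into an existing matched pair but a minimum dominating set cannot, creating the strict gap $\gamma_{\rm pr}(G)<2\gamma(G)$.

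The main obstacle I anticipate is the direction where $E$ is unsatisfiable: unlike the total-domination argument, controlling the plain domination number $\gamma(G)$ tightly is delicate, because domination allows a single vertex to cover many neighbors and the pairing constraint interacts nontrivially with which literal vertices are chosen. I expect to spend most of the effort proving that an unsatisfied instance genuinely decouples the optimal dominating set from the optimal paired dominating set, i.e.\ that one cannot simultaneously achieve the minimum domination count and satisfy the matching requirement at twice that cost. To tame this, I would design each literal gadget so that the \emph{only} economical way to both dominate the gadget's private leaves and help cover an attached clause vertex is to select the literal vertex (or its negation) together with a fixed matched partner; this rigidity should let me translate any size-$2\gamma(G)$ paired dominating set back into a consistent truth assignment, closing the equivalence. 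If the present literal gadget $G(u_i)$ from Fig.~\ref{f4} does not already enforce this rigidity for plain domination, I would augment it with one or two additional pendant paths of even length to lock the counts, keeping the construction bipartite and of size polynomial in $n+m$.
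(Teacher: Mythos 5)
Your plan pins the wrong parameter, and as a result the correspondence runs backwards. In your first step you fix $\gamma(G)=an+bm$ independently of the assignment; then the equality $\gamma_{\rm pr}(G)=2\gamma(G)$ becomes the assertion that \emph{no} paired dominating set of size less than $2(an+bm)$ exists, i.e., that $\gamma_{\rm pr}$ sits at its maximum conceivable value. But in this kind of gadget reduction satisfiability only ever \emph{helps}: choosing true-literal vertices gets clause vertices dominated for free, which can only make $\gamma_{\rm pr}$ \emph{smaller}. Concretely, with the gadgets you propose to reuse from Theorem~\ref{NPP} (the star $G(u_i)$ of Fig.~\ref{f4} and the path $G(C_j)$ of Fig.~\ref{f5} attached through $c_j^5$), one has $\gamma(G)=n+2m$ always (take all $b_i$ and all $\{c_j^2,c_j^4\}$), while a satisfying assignment yields the paired dominating set $\bigcup_i\{b_i,t_i\}\cup\bigcup_j\{c_j^2,c_j^3\}$, where $t_i$ is the true literal vertex, of size $2n+2m<2n+4m=2\gamma(G)$. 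So satisfiable instances land on \emph{strict inequality} --- the opposite of what you claim; worse, even an unsatisfiable formula admits a paired dominating set of size $2n+4m-2$ (cover one clause by one literal vertex, pay $4$ inside each remaining clause gadget), so with these gadgets \emph{every} instance maps to ``no'' and the reduction is vacuous. Two further problems are independent of the gadget choice: in your second step, exhibiting a paired dominating set of size $2\gamma(G)$ only reproves the trivial bound $\gamma_{\rm pr}\le 2\gamma$ and does not ``establish equality'' (equality needs a \emph{lower} bound on $\gamma_{\rm pr}$, which you never address); and your third step's fallback, that $\gamma(G)$ ``is genuinely larger than the naive count,'' contradicts your first step, where $\gamma$ was pinned.

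The missing idea is to invert the roles: pin $\gamma_{\rm pr}$ and let satisfiability control $\gamma$. The paper takes each literal gadget to be a $6$-cycle $u_iw_iu_i'z_i^1z_i^2z_i^3$ and each clause to be a \emph{single} vertex $c_j$ joined to its three literal vertices. A hexagon needs $2$ vertices for domination but $4$ for paired domination, and the slack in the paired case is exactly what makes the parameter rigid: a size-$4n$ paired dominating set can contain \emph{both} $u_i$ and $u_i'$ for every $i$, so all clause vertices are covered unconditionally and $\gamma_{\rm pr}(G)=4n$ whatever the formula. Domination at cost $2n$, by contrast, can afford only one antipodal pair per hexagon (hence at most one of $u_i,u_i'$), so $\gamma(G)=2n$ holds precisely when some consistent choice of literal vertices dominates every $c_j$, i.e., precisely when $E$ is satisfiable; otherwise $\gamma(G)>2n$ and $2\gamma(G)>4n=\gamma_{\rm pr}(G)$. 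Both directions of ``satisfiable $\Leftrightarrow\gamma_{\rm pr}(G)=2\gamma(G)$'' are then plain existence-to-existence translations. No amount of pendant-path tuning of the Theorem~\ref{NPP} gadgets can repair your version, because it is the direction of the correspondence, not the rigidity of the counts, that is at fault.
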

\begin{proof}
We describe a reduction from 3SAT, which was proven to be \emph{NP}-complete in \cite{GJ79}, to the considered problem. The formula in 3SAT is given in conjunctive normal form, where each clause contains three literals. We assume that the formula contains the instance of any literal $u$ and its negation $\neg u$ (in the other case all clauses containing the literal $u$ are satisfied by the true assignment of $u$). 

Given an instance $E$, the set of literals $U=\{u_1,u_2,\ldots,u_n\}$ and the set of clauses $C=\{c_1,c_2,\ldots,c_m\}$ of 3SAT, we construct a bipartite graph $G$ whose order is polynomially bounded in terms of $n$ and $m$ such that the formula is satisfiable if and only if $\gamma_{pr}(G)=2\gamma(G)$. For each literal $u_i$, we construct the graph $G(u_i)$ which is a copy of $C_6$ with vertices labeled $u_i, w_i, u'_i, z^1_i,z^2_i,z^3_i$, see Fig.~\ref{f1}.

\begin{figure}[h!]
\begin{center}
\begin{tikzpicture}[scale=0.6]

\draw (1,0) -- (0,2) -- (1,4) -- (3,4) --(4,2)--(3,0)--(1,0);

\shade[ball color=black] (1,0) circle (3pt)node[left] {$u_i$} ;
\shade[ball color=black] (1,4) circle (3pt)node[left] {$u'_i$};
\shade[ball color=black] (0,2) circle (3pt) node[right] {$w_i$} ;
\shade[ball color=black] (3,0) circle (3pt) node[right] {$z^3_i$} ;
\shade[ball color=black] (4,2) circle (3pt) node[right] {$z^2_i$} ;
\shade[ball color=black] (3,4) circle (3pt) node[right] {$z^1_i$} ;
\end{tikzpicture}
\end{center}
\caption{The graph $G(u_i)$}\label{f1}
\end{figure}
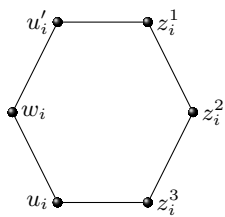

For every clause $C_j$, we create a vertex $c_j$ and for every clause $C_j$ with literals $x, y$ and $z$, we create the three edges $c_jx, c_jy$ and $c_jz$. If, for example, $C_1 = u_1 \lor u_2 \lor \lnot u_3$, then these edges are $c_1u_1, c_1u_2$ and $c_1 u'_3$ as shown in Fig.~\ref{f3}. This completes the description of $G$. By the construction of $G$ it is clear that the obtained graph is bipartite.

\begin{figure}[h!]
\begin{center}
\begin{tikzpicture}[scale=0.5]

\draw (1,0) -- (0,2) -- (1,4) -- (3,4) --(4,2)--(3,0)--(1,0);
\draw (1,6) -- (0,8) -- (1,10) -- (3,10) --(4,8)--(3,6)--(1,6);
\draw (1,12) -- (0,14) -- (1,16) -- (3,16) --(4,14)--(3,12)--(1,12);
\draw(1,0)--(-8,8)--(1,6)--(-8,8)--(1,16);

\shade[ball color=black] (1,0) circle (3pt)node[left] {$u_1$} ;
\shade[ball color=black] (1,4) circle (3pt)node[left] {$u'_1$};
\shade[ball color=black] (0,2) circle (3pt) node[right] {$w_1$} ;
\shade[ball color=black] (3,0) circle (3pt) node[right] {$z^3_1$} ;
\shade[ball color=black] (4,2) circle (3pt) node[right] {$z^2_1$} ;
\shade[ball color=black] (3,4) circle (3pt) node[right] {$z^1_1$} ;

\shade[ball color=black] (1,6) circle (3pt)node[below] {$u_2$} ;
\shade[ball color=black] (1,10) circle (3pt)node[left] {$u'_2$};
\shade[ball color=black] (0,8) circle (3pt) node[right] {$w_2$} ;
\shade[ball color=black] (3,6) circle (3pt) node[right] {$z^3_2$} ;
\shade[ball color=black] (4,8) circle (3pt) node[right] {$z^2_2$} ;
\shade[ball color=black] (3,10) circle (3pt) node[right] {$z^1_2$} ;

\shade[ball color=black] (1,12) circle (3pt)node[left] {$u_3$} ;
\shade[ball color=black] (1,16) circle (3pt)node[above] {$u'_3$};
\shade[ball color=black] (0,14) circle (3pt) node[right] {$w_3$} ;
\shade[ball color=black] (3,12) circle (3pt) node[right] {$z^3_3$} ;
\shade[ball color=black] (4,14) circle (3pt) node[right] {$z^2_3$} ;
\shade[ball color=black] (3,16) circle (3pt) node[right] {$z^1_3$} ;

\shade[ball color=black] (-8,8) circle (3pt) node[left] {$c_1$} ;

\end{tikzpicture}
\end{center}
\caption{The edges between $c_1$ and $G(u_1)\cup G(u_2)\cup G(u_3)$ for the clause $C_1 = u_1 \lor u_2 \lor \lnot u_3$}\label{f3}
\end{figure}
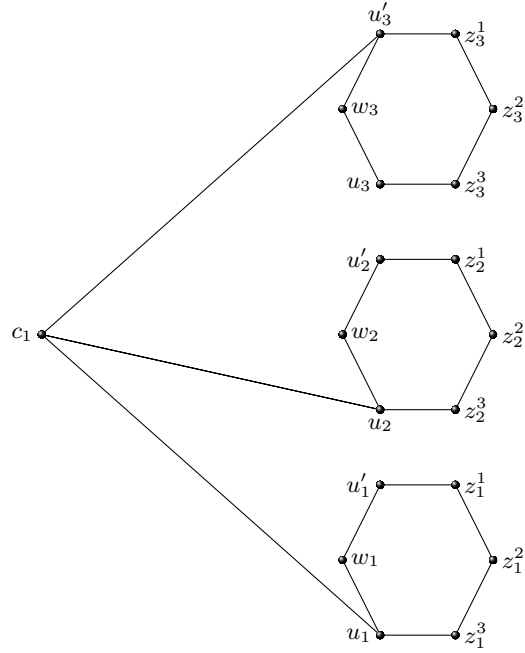

It is easy to verify that every dominating set has to include at least two vertices from each $G(u_i)$. Therefore, $\gamma(G)\geq 2n$. 

Let $D_p$ be a minimum paired dominating set of $G$. Since only two vertices of each induced cycle of $G(u_i)$ may have a neighbor outside the cycle and these two vertices are not adjacent, $|V(G(u_i))\cap D_p|\geq 3$. Moreover, no two vertices belonging to two different $G(u_i)$ components are adjacent. Therefore, $\gamma_{\rm pr}(G)\geq 4n$. On the other hand, it is easy to see that there exists a paired dominating set of size $4n$ containing $u_i$ and $u_i'$ for $i=1,2,\dots, n$, so for these reasons, $4n=\gamma_{\rm pr}(G)$.
Hence, it remains to prove that $E$ is satisfiable if and only if $\gamma(G)=2n$. 

First, we assume that $E$ is satisfiable and consider a satisfying truth assignment. We will now construct a dominating set $D$ of $G$ with size $2n$. We will first put to $D$ the vertices corresponding to each true literal. If $u_i\in D$, then we also put $z_i^1$ to $D$ and if $u'_i\in D$, then we also put $z_i^3$ to $D$. Since $E$ is satisfiable, each $c_j$ is dominated by a vertex from $G(u_i)$. Hence, this way, we have constructed a dominating set $D$ of $G$ of cardinality $2n$.

Next, assume that $\gamma(G)=2n$. Let $D$ be a minimum dominating set of $G$. Then exactly two vertices of $G(u_i)$, $i=1,2,\dots, n$ belong to $D$. Then, by the construction of $G$, each $c_j$ for $j=1,2,\dots,m$ is dominated by a vertex of $G(u_i)$, in particular, either $u_i$ or $u'_i$. For every literal, if its corresponding vertex belongs to $D$, set the literal to true; otherwise, set it to false. Altogether, it follows that the truth assignment defined above satisfies $E$. This completes the proof. 

\end{proof}

\section{Trees with Large Paired Domination Number}\label{trees}
In this section we give a characterization of ($\gamma_{\rm pr},2\gamma$)--trees.

Let $A\subseteq V(G)$ and let $v\in V(G)$. The \emph{distance} between $v$ and $A$ is the minimum distance between $v$ and a vertex of $A$.
For a tree $T$, denote by $S(T)$ the set of all support vertices of $T$ and let $R(T)$ be the set of vertices of $T$ that are at distance at least three from $S(T)$.

Before we present the main result of this section, we recall some results. 
\begin{prop}{\rm \cite{Rall}}\label{p1}
For any graph $G$ without isolated vertices, $\gamma(G)\geq \rho_2(G)$.
\end{prop}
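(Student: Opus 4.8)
The plan is to exhibit an injection from a maximum $2$-packing into a minimum dominating set, which immediately yields $\rho_2(G) \le \gamma(G)$. First I would fix a maximum $2$-packing $P$, so that $|P| = \rho_2(G)$, and a minimum dominating set $D$, so that $|D| = \gamma(G)$. The central observation is that the defining property of a $2$-packing --- pairwise distances at least $3$ --- is precisely the statement that the closed neighborhoods $N_G[v]$ for $v \in P$ are pairwise disjoint. Indeed, if two of them shared a vertex $d$, then $d$ would be within distance $1$ of two distinct packing vertices $u$ and $v$, forcing the distance between $u$ and $v$ to be at most $2$ by the triangle inequality, which contradicts the packing condition.

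Next I would invoke that $D$ dominates $G$: for each $v \in P$, the vertex $v$ is either in $D$ or adjacent to a vertex of $D$, so in either case $D \cap N_G[v] \neq \emptyset$. Selecting one such dominator for every $v \in P$ defines a map $f : P \to D$ with $f(v) \in N_G[v]$. Since the sets $N_G[v]$ are pairwise disjoint, distinct packing vertices are sent to distinct images, so $f$ is injective. Counting then gives $\rho_2(G) = |P| \le |D| = \gamma(G)$, as required.

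The argument is short and essentially self-contained, so there is no serious obstacle; the one step I would state with care is the translation of the distance condition into disjointness of closed neighborhoods, since this is exactly where the ``$+1$'' in the $k$-packing definition is used and where the whole bound comes from. I would also note that the hypothesis that $G$ has no isolated vertices is not actually needed for the inequality itself --- it is inherited from the surrounding context where parameters such as $\gamma_t$ demand it --- so the proof applies verbatim regardless.
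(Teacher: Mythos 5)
Your proof is correct, and it is the standard argument for this classical inequality: distance at least $3$ between packing vertices translates exactly into disjointness of closed neighborhoods, and picking one dominator inside each $N_G[v]$ gives an injection from the packing into the dominating set. Note that the paper itself offers no proof of this proposition --- it is quoted as a known result with a citation to Rall --- so there is nothing internal to compare against; your argument (including the correct observation that the no-isolated-vertices hypothesis is superfluous here) is exactly the proof one would expect to find in the cited source.
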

\begin{prop}{\rm \cite{Moon}}\label{p2}
For every tree $T$, $\gamma(T)= \rho_2(T)$.
\end{prop}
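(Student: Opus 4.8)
The plan is to establish the two inequalities separately. The bound $\gamma(T)\ge\rho_2(T)$ is free: it holds for every graph without isolated vertices by Proposition~\ref{p1}, and for the single-vertex tree both parameters equal $1$. So it remains to prove $\gamma(T)\le\rho_2(T)$, i.e. to exhibit a $2$-packing of $T$ whose size is at least $\gamma(T)$. I would do this by strong induction on $|V(T)|$.

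For the inductive step, root $T$ and let $v$ be a leaf of maximum depth, $u$ its support (parent) and $w$ the grandparent; maximality forces every child of $u$ to be a leaf, so $T_u$, the pendant star consisting of $u$ together with its leaf children, is a $K_{1,k}$ attached to the rest of the tree only through the edge $uw$. Set $T'=T-T_u$. Two elementary facts drive the argument. First, adding $u$ to any dominating set of $T'$ dominates all of $T_u$ (and $w$), so $\gamma(T)\le\gamma(T')+1$; conversely, from a minimum dominating set $D$ of $T$ with $u\in D$, the set $(D\cap V(T'))\cup\{w\}$ dominates $T'$ and has size at most $|D|$, giving $\gamma(T')\le\gamma(T)$. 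Hence $\gamma(T)\in\{\gamma(T'),\gamma(T')+1\}$. Second, because $T_u$ is a pendant branch, distances between vertices of $T'$ are identical in $T'$ and in $T$, so every $2$-packing of $T'$ is already a $2$-packing of $T$; in particular $\rho_2(T)\ge\rho_2(T')$. By the induction hypothesis $\gamma(T')=\rho_2(T')$.

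Now I combine these. If $\gamma(T)=\gamma(T')$, then $\rho_2(T)\ge\rho_2(T')=\gamma(T')=\gamma(T)$ and we are done. Otherwise $\gamma(T)=\gamma(T')+1$, and it suffices to enlarge a maximum $2$-packing $P'$ of $T'$ by one vertex. The natural candidate is the leaf $v$: for every $x\in V(T')$ the path from $v$ leaves through $u$ and $w$, so $d_T(v,x)=2+d_{T'}(w,x)$, and therefore $P'\cup\{v\}$ is a valid $2$-packing of $T$ precisely when $w\notin P'$. In that case $\rho_2(T)\ge\rho_2(T')+1=\gamma(T)$ and the proof closes via Proposition~\ref{p1}. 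The base case is the star (when $u$ is the root), where $\gamma=\rho_2=1$.

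The one point that needs real care --- and that I expect to be the main obstacle --- is the case in which $\gamma(T)=\gamma(T')+1$ yet every maximum $2$-packing of $T'$ contains $w$, so that $v$ cannot be appended; the relation $d_T(v,w)=2$ is exactly what blocks the naive extension. The intended resolution is to show that this configuration cannot occur: if $w$ is forced into every maximum packing of $T'$, then $w$ is dually redundant for domination once $u$ is available, so a minimum dominating set of $T'$ can be chosen that does not spend a vertex to dominate $w$; adding $u$ then dominates $T$ with no extra vertex, forcing $\gamma(T)=\gamma(T')$ and contradicting the case assumption. Formalizing this implication --- for instance by a secondary induction on $T'-N[w]$, or by an exchange argument that converts an essential packing vertex into a saving in the dominating set --- is where the bulk of the technical work lies; everything else is bookkeeping over the pendant-star reduction.
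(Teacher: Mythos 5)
The paper does not actually prove this proposition --- it is quoted from Meir and Moon \cite{Moon} --- so your attempt has to be judged on its own merits rather than against an in-paper argument. Your set-up is correct as far as it goes: $\gamma(T)\ge\rho_2(T)$ via Proposition~\ref{p1}, the pendant-star reduction $T'=T-T_u$, the two facts $\gamma(T)\in\{\gamma(T'),\gamma(T')+1\}$ and $\rho_2(T)\ge\rho_2(T')$, the disposal of the case $\gamma(T)=\gamma(T')$, and the observation that $P'\cup\{v\}$ is a $2$-packing of $T$ exactly when $w\notin P'$ are all sound.

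But the remaining case is a genuine gap, not ``bookkeeping,'' and you have not closed it. What you need is: if every maximum $2$-packing of $T'$ contains $w$, then some set of only $\gamma(T')-1$ vertices dominates $V(T')\setminus\{w\}$ (such a set together with $u$ dominates $T$, forcing $\gamma(T)=\gamma(T')$ and the desired contradiction). First, your phrasing ``a minimum dominating set of $T'$ can be chosen that does not spend a vertex to dominate $w$'' is off: no minimum dominating set of $T'$ can witness this, since the witness must have size strictly less than $\gamma(T')$; what is needed is a smaller set dominating everything except $w$. More importantly, this implication does not follow from the induction hypothesis $\gamma(T')=\rho_2(T')$. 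It is equivalent to a \emph{rooted} strengthening of the whole theorem: for every tree and every distinguished vertex $w$, the maximum size of a $2$-packing avoiding $w$ equals the minimum size of a set dominating all vertices other than $w$. Indeed, attaching a pendant path $w$--$x$--$y$ to $T'$ raises both of these rooted quantities by exactly one, so the rooted statement is precisely Meir--Moon applied to a \emph{larger} tree --- which is exactly why it cannot be invoked inside your induction on $|V(T)|$. (Your alternative suggestion, a secondary induction on $T'-N[w]$, also does not work as stated: two vertices in different components of $T'-N[w]$ can be at distance $2$ in $T'$, so packings do not lift.) The standard repair is to take the rooted statement itself as the induction hypothesis and re-run the leaf-deletion argument for it; as written, your proof asserts, rather than proves, the one step on which the theorem hinges.
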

\begin{prop}{\rm \cite{Bresar}}\label{p4}
For every nontrivial tree $T$, $\gamma_{\rm pr}(T)= 2\rho_3(T)$.
\end{prop}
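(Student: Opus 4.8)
The plan is to prove the two inequalities separately, since the lower bound holds for all graphs while the matching upper bound is where the tree structure is decisive. For the lower bound $\gamma_{\rm pr}(T)\ge 2\rho_3(T)$ I would argue in full generality (for any graph without isolated vertices). Let $D$ be a minimum paired dominating set, fix a perfect matching $M$ of $\langle D\rangle$, and let $A$ be a maximum $3$-packing. For each $a\in A$ the set $D$ dominates $a$, so there is a vertex of $D\cap N[a]$; let $e_a\in M$ be the matching edge containing it. The key observation is that $a\mapsto e_a$ is injective: if $e_a=e_b=xy$ with $a\ne b$, then one endpoint of $xy$ lies in $N[a]$ and one in $N[b]$, and since $x$ and $y$ are adjacent this forces $d(a,b)\le 3$, contradicting $d(a,b)\ge 4$. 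Hence $|M|\ge|A|$ and $\gamma_{\rm pr}(T)=|D|=2|M|\ge 2\rho_3(T)$. (This is exactly the general inequality attributed to Bresar in the introduction, so it may simply be cited.)

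For the reverse inequality $\gamma_{\rm pr}(T)\le 2\rho_3(T)$ I would induct on $|V(T)|$, the small cases such as $K_2$ and $P_3$ being immediate. Root $T$ at one end of a longest path and let $v$ be a leaf of maximum depth, $u$ its support vertex, $w$ the parent of $u$, and $x$ the parent of $w$ (when it exists). Maximality of the path forces every child of $u$, and more generally every grandchild of $w$, to be a leaf, so that $w$'s children are leaves or support vertices each of whose children are leaves. I would delete the subtree $T_w$ rooted at $w$ and apply the induction hypothesis to $T'=T-V(T_w)$ to obtain a maximum $3$-packing $A'$ and a paired dominating set $D'$ of $T'$ with $|D'|=2\rho_3(T')=2|A'|$.

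The extension step then treats $T_w$ locally. If $w$ has support-vertex children $u_1,\dots,u_s$ (with $u=u_1$), I would add to $D'$ the pairs $\{w,u_1\}$ and $\{u_i,\ell_i\}$ for $2\le i\le s$ (where $\ell_i$ is a leaf child of $u_i$), which dominate all of $T_w$ using $2s$ new vertices, and simultaneously add to the packing one deepest leaf $v_i$ under each $u_i$. Two such leaves lie at distance exactly $4$ through $w$, so $\{v_1,\dots,v_s\}$ is an internal $3$-packing, and the count $2s$ against $s$ preserves the target ratio, giving $\gamma_{\rm pr}(T)\le|D'|+2s\le 2\bigl(|A'|+s\bigr)\le 2\rho_3(T)$.

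The main obstacle is the interface vertex $x$: every vertex within distance $3$ of a chosen leaf $v_i$ lies inside $T_w$ except for $x$ itself, which sits at distance $3$ from $v_i$. Thus $A'\cup\{v_1,\dots,v_s\}$ is a genuine $3$-packing only when $x\notin A'$, and if the inductively produced packing is forced to use $x$ the distance constraint collapses. Overcoming this is the real work of the proof: I would strengthen the induction hypothesis to return a maximum $3$-packing of $T'$ that avoids the designated boundary vertex $x$ (justified by an exchange argument replacing $x$ with a vertex of $T'$ strictly deeper than $x$), or, when $x$ cannot be avoided, absorb $x$ into the deleted part by instead using the pair $\{w,x\}$ and recursing on the forest $T-V(T_w)-\{x\}$ componentwise. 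Managing this boundary bookkeeping and the accompanying case distinction on the neighbourhood of $w$ and $x$, rather than the arithmetic of the extension, is where the difficulty concentrates.
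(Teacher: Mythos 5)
First, a point of comparison: the paper does not prove this proposition at all — it is quoted from \cite{Bresar} as a known result — so your attempt can only be judged on its own terms, not against an internal proof. Your lower bound $\gamma_{\rm pr}(T)\ge 2\rho_3(T)$ is correct and complete (it is exactly the general inequality of Proposition~\ref{p3}, and your injectivity argument for $a\mapsto e_a$ is sound). The gap is in the upper bound, at precisely the point you yourself flag as ``the real work.'' Your main branch is fine: if the inductively obtained maximum $3$-packing $A'$ of $T'=T-V(T_w)$ avoids $x$, then $A'\cup\{v_1,\dots,v_s\}$ is a $3$-packing of $T$ and the count $2s$ new dominating vertices against $s$ new packing vertices closes the induction. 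The exchange argument can also be justified when $x$ has a child inside $T'$: any other packing vertex reaches a descendant $y$ of $x$ through $x$, so swapping $x$ for $y$ only increases those distances, and $d(y,v_i)=d(y,x)+3\ge 4$. But when $w$ is the \emph{only} child of $x$, the vertex $x$ is a leaf of $T'$, no deeper exchange vertex exists, and swapping $x$ with its parent can break the packing (other packing vertices may be at distance exactly $4$ from $x$). So everything rests on your fallback, and the fallback as stated fails. Take $T=P_8$ with vertices $1,\dots,8$ rooted at $1$, so $v=8$, $u=7$, $w=6$, $x=5$, $s=1$. Then $T'=P_5$, whose \emph{unique} maximum $3$-packing is $\{1,5\}\ni x$. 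Your fallback recurses on $F=T-\{6,7,8\}-\{5\}=P_4$ (minimum paired dominating set $\{2,3\}$, $\rho_3(P_4)=1$) and adds the pairs $\{5,6\}$ and $\{7,8\}$, producing a paired dominating set of size $6$, whereas $\gamma_{\rm pr}(P_8)=2\rho_3(P_8)=4$. The defect is structural, not cosmetic: the pairs $\{w,x\},\{u_1,\ell_1\},\dots,\{u_s,\ell_s\}$ add $2s+2$ vertices to the dominating set, while only the $s$ leaves $v_1,\dots,v_s$ can be added to the packing ($x$ and $w$ are within distance $3$ of them), so the required inequality $\gamma_{\rm pr}(T)\le 2\rho_3(T)$ never follows in this branch; indeed on $P_8$ one would need $\rho_3(F)+s+1=3$ packing vertices, but $\rho_3(P_8)=2$. (Your componentwise recursion has a second latent flaw: packings of distinct components of a forest need not concatenate into a packing of $T$, since two vertices adjacent to $x$ in different components are at distance $2$ in $T$; this just happens not to bite here because $F$ is connected when $x$ is a leaf of $T'$.)

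The repair is small but it is not the one you propose: in the bad case, delete the whole subtree $T_x$ (i.e.\ $V(T_w)\cup\{x\}$, so the recursion stays on a single tree), and keep \emph{exactly} the same pairs $\{w,u_1\},\{u_2,\ell_2\},\dots,\{u_s,\ell_s\}$ as in the main branch — the vertex $x$ is then dominated by $w$ for free, no extra pair is needed, and every vertex of $T-V(T_x)$ is automatically at distance at least $4$ from each $v_i$, so the packings concatenate with no boundary condition at all. With that change (plus explicit base cases for trees of diameter at most $3$, which you only gesture at), your induction closes; as written, it does not.
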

\begin{prop}{\rm \cite{HSlater}}\label{p5}
If $\gamma_{\rm pr}(G)=2\gamma(G)$, then every minimum dominating set of $G$ is a minimum independent dominating set of $G$.
\end{prop}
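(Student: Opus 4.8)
The plan is to argue by contraposition. Assuming $\gp(G)=2\gamma(G)$, I will show that if some minimum dominating set $D$ were not independent, then $G$ would admit a paired dominating set of order strictly less than $2\gamma(G)$, a contradiction. Once every minimum dominating set is known to be independent, the rest is immediate: such a $D$ is then an independent dominating set of order $\gamma(G)$, and since every independent dominating set is in particular dominating and so has at least $\gamma(G)$ vertices, $D$ is a minimum one. Thus all the work lies in the contrapositive, for which I use the standing assumption that $G$ has no isolated vertices.

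So suppose $D$ is a minimum dominating set, $|D|=\gamma(G)=k$, whose induced subgraph $\langle D\rangle$ contains an edge. I fix a maximum matching $M$ of $\langle D\rangle$; it has $t\ge 1$ edges and covers $2t$ vertices, while the uncovered set $U=D\setminus V(M)$ is independent in $G$, since any edge inside $U$ could be added to $M$. The idea is to keep all of $D$ and attach to each $u\in U$ a distinct partner $p(u)$ chosen from $V(G)\setminus D$. If this can be done, then $M$ together with the edges $\{u\,p(u):u\in U\}$ is a perfect matching of $S=D\cup\{p(u):u\in U\}$, because the partners are external and pairwise distinct and $U$ is disjoint from $V(M)$. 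Hence $S$ is a paired dominating set with $|S|=2t+2|U|=2k-2t\le 2k-2<2\gamma(G)$, the desired contradiction.

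The crux is therefore to produce a matching of $U$ into $V(G)\setminus D$ that saturates $U$, and this is the step I expect to carry the real weight. First I would record that every $u\in U$ has a neighbour outside $D$: otherwise $N_G[u]\subseteq D$, and minimality of $D$ forces $u$ to have a private neighbour lying in $N_G[u]\subseteq D$, which can only be $u$ itself; but then $u$ has no neighbour in $D$, so $N_G(u)=\emptyset$ and $u$ is isolated in $G$, which is excluded. Existence of the saturating matching then follows from Hall's theorem applied to the bipartite graph between $U$ and $V(G)\setminus D$, and the main obstacle is verifying Hall's condition. I would deduce it directly from minimality: if some $T\subseteq U$ had external neighbourhood $W=N_G(T)\setminus D$ with $|W|<|T|$, then $(D\setminus T)\cup W$ would be a dominating set of order $|D|-|T|+|W|<|D|$. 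Checking that this replacement still dominates is a routine case analysis---each $u\in T$ is re-dominated by its external neighbour in $W$, each vertex of $D\setminus T$ dominates itself, and any vertex of $V(G)\setminus D$ whose only $D$-dominators lay in $T$ is itself a neighbour of $T$ and hence belongs to $W$---and it contradicts the minimality of $D$. This confirms Hall's condition, yields the distinct external partners, and so completes the contradiction and the proof.
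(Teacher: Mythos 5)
Your proof is correct, but there is no in-paper argument to compare it with: the paper states Proposition~\ref{p5} as a quoted result of Haynes and Slater \cite{HSlater} and gives no proof of it, so your write-up is judged on its own merits. On those merits it is sound and self-contained. The easy reduction is handled correctly: once $D$ is known to be independent, it is an independent dominating set of cardinality $\gamma(G)$, and since every independent dominating set is dominating and hence has at least $\gamma(G)$ vertices, $D$ is a minimum one. The main construction also checks out. Taking a maximum matching $M$ of $\langle D\rangle$ with $t\ge 1$ edges, your private-neighbour argument correctly shows that every $M$-unsaturated vertex of $D$ has a neighbour outside $D$ (this is where the assumption that $G$ has no isolated vertices enters, which is legitimate since $\gamma_{\rm pr}(G)$ is only defined for such graphs). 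Your exchange set $(D\setminus T)\cup W$ does dominate $G$ --- vertices of $T$ are caught by their external neighbours, which necessarily lie in $W$, and any external vertex all of whose $D$-dominators lie in $T$ is itself in $W$ --- and since it has fewer than $|D|$ vertices this contradicts minimality of $D$, certifying Hall's condition; hence $U=D\setminus V(M)$ can be matched into $V(G)\setminus D$ with pairwise distinct partners. The resulting set $D\cup\{p(u):u\in U\}$ contains $D$, so it dominates, its induced subgraph has the perfect matching $M\cup\{up(u):u\in U\}$, and its cardinality is $2\gamma(G)-2t\le 2\gamma(G)-2$, contradicting $\gamma_{\rm pr}(G)=2\gamma(G)$. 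Incidentally, your argument proves the slightly stronger quantitative fact $\gamma_{\rm pr}(G)\le 2\gamma(G)-2t$, where $t$ is the matching number of the subgraph induced by any minimum dominating set; the Hall-type treatment of the ``collision'' problem (several unsaturated vertices of $D$ competing for the same external partner) is exactly the point where naive pairing arguments are usually left vague, and you dispose of it rigorously.
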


An immediate consequence of the last result is what follows.
\begin{obs}
If $G$ is a ($\gamma_{\rm pr},2\gamma$)-graph and $D$ is a minimum dominating set of $G$, then choosing for every vertex of $D$ one vertex from its neighbor results in a minimum paired-dominating set.
\end{obs}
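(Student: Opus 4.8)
The plan is to use Proposition~\ref{p5} to control the structure of $D$ and then reduce the construction of a perfect matching to the existence of a system of distinct representatives. Since $G$ is a $(\gamma_{\rm pr},2\gamma)$-graph, Proposition~\ref{p5} guarantees that the minimum dominating set $D$ is independent; in particular, for every $v\in D$ we have $N(v)\cap D=\emptyset$. As $G$ has no isolated vertices, each $v\in D$ has at least one neighbor, and all such neighbors lie outside $D$. I would therefore aim to pick, for each $v\in D$, a neighbor $v'\in N(v)$ in such a way that the chosen neighbors are pairwise distinct, set $D'=D\cup\{v':v\in D\}$, and argue that $D'$ is a minimum paired dominating set.

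The main obstacle is to show that the neighbors can indeed be selected injectively, and this is the heart of the proof. I would phrase the requirement as a Hall-type condition on the bipartite incidence between $D$ and $V(G)\setminus D$: a system of distinct representatives exists precisely when $|N(S)|\ge|S|$ for every $S\subseteq D$, where $N(S)=\bigcup_{v\in S}N(v)$. Suppose, for contradiction, that $|N(S)|<|S|$ for some nonempty $S\subseteq D$. Since $D$ is independent, $N(S)$ is disjoint from $D$, and every vertex of $S$ has all of its neighbors in $N(S)$. Replacing $S$ by $N(S)$, I would consider $D''=(D\setminus S)\cup N(S)$ and check that it still dominates $G$: any vertex dominated by $D\setminus S$ remains dominated, while any vertex dominated by $S$ lies in $S\cup N(S)$ and is dominated by $N(S)\subseteq D''$ (such a vertex either belongs to $N(S)$ or, lying in $S$, has a neighbor in $N(S)$, as $G$ has no isolated vertices). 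But then $|D''|=|D|-|S|+|N(S)|<|D|$, contradicting the minimality of $D$. Hence Hall's condition holds and an injective choice $v\mapsto v'$ exists.

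Once the injective selection is available, the remaining verification is routine. The set $D'$ contains $D$, so it dominates $G$. Because $D$ is independent the chosen neighbors lie outside $D$, and because the selection is injective they are pairwise distinct; consequently $|D'|=2|D|=2\gamma(G)$. The edges $\{vv':v\in D\}$ cover every vertex of $D'$ and are pairwise vertex-disjoint: two distinct such edges $v_1v_1'$ and $v_2v_2'$ cannot share an endpoint, since $v_1\neq v_2$, the chosen neighbors are distinct by injectivity, and no $D$-vertex equals a chosen neighbor (the former lie in $D$, the latter outside). Hence they form a perfect matching of $\langle D'\rangle$, so $D'$ is a paired dominating set of cardinality $2\gamma(G)=\gamma_{\rm pr}(G)$ and is therefore minimum, which completes the argument.
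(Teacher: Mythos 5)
Your proof is correct, and it is genuinely more than what the paper does: the paper offers no argument at all, presenting the statement as an ``immediate consequence'' of Proposition~\ref{p5}. The implicit one-line argument (take the independent minimum dominating set $D$, give each $v\in D$ a neighbour $v'$, pair them up, and note the resulting set has size $2\gamma(G)=\gamma_{\rm pr}(G)$) silently assumes that the chosen neighbours are pairwise distinct, which is exactly the point your Hall-type argument establishes. This is not a vacuous worry: in the spider obtained by subdividing each edge of $K_{1,3}$ once --- a $(\gamma_{\rm pr},2\gamma)$-graph --- the three subdivision vertices form a minimum dominating set, and if two of them choose the centre as their neighbour, the resulting five-vertex set has odd cardinality, admits no perfect matching, and hence is not a paired dominating set; so the observation is only true under your reading, namely that the neighbours can be chosen injectively. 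Your exchange argument (if $|N(S)|<|S|$ for some $S\subseteq D$, then $(D\setminus S)\cup N(S)$ is a smaller dominating set) verifies Hall's condition using only the minimality of $D$, with independence from Proposition~\ref{p5} invoked precisely where it is needed: to keep $N(S)$, and hence the representatives, outside $D$, so that the chosen edges form a perfect matching of the induced subgraph. In short, the paper's route buys brevity at the cost of precision; yours supplies the missing injectivity argument and turns the observation into a statement with a complete proof.
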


Now we state the main result of this section.
\begin{theorem}\label{th-trees}
A tree $T$ is a $(\gamma_{\rm pr},2\gamma)$--tree if and only if 
\begin{itemize}
    \item $S(T)$ is an independent set,
    \item $R(T)$ is a 3-packing in $T$, and
    \item $S(T)\cup R(T)$ is a dominating set of $T$.
\end{itemize} 
\end{theorem}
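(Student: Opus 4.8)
The plan is to characterize $(\gamma_{\rm pr}, 2\gamma)$-trees by exploiting the three propositions available: $\gamma(T) = \rho_2(T)$ (Prop.~\ref{p2}), $\gamma_{\rm pr}(T) = 2\rho_3(T)$ (Prop.~\ref{p4}), and the structural consequence that in a $(\gamma_{\rm pr},2\gamma)$-graph every minimum dominating set is independent (Prop.~\ref{p5}). Combining the first two, the condition $\gamma_{\rm pr}(T) = 2\gamma(T)$ is equivalent to $\rho_3(T) = \rho_2(T)$. So the entire theorem reduces to proving that, for a tree, $\rho_2(T) = \rho_3(T)$ holds **if and only if** the three bullet conditions hold. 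This is the key observation that makes the problem tractable, and it replaces the operation-based characterizations of Henning–Vestergaard and Hou with a direct packing-number identity.

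Let me think about the structure. Support vertices must appear in packings in a controlled way: a leaf and its support neighbor are at distance $1$, so any $2$-packing (hence any $3$-packing) can contain at most one vertex from each such pair. The set $S(T) \cup R(T)$ is my candidate extremal set. First I would verify that, under the three conditions, $S(T) \cup R(T)$ is simultaneously a $3$-packing and a dominating set of size $\rho_2(T)$. That $R(T)$ is a $3$-packing is assumed; that $S(T)$ is independent is assumed; I need that vertices of $R(T)$ are at distance $\geq 3$ from $S(T)$ (true by definition of $R(T)$) and that support vertices are pairwise at distance $\geq 3$. Here is a subtle point: independence of $S(T)$ only gives distance $\geq 2$, not $\geq 3$, between support vertices, so $S(T)$ alone need not be a $3$-packing. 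I expect this is where the real argument lives.

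**The main obstacle** I anticipate is precisely reconciling the distance-$2$ requirement for $\rho_2$ against the distance-$3$ requirement for $\rho_3$ with respect to the support vertices. Two adjacent support vertices (distance $1$) are forbidden by the independence condition, but two support vertices at distance exactly $2$ are allowed in a $2$-packing yet forbidden in a $3$-packing. The resolution should be that in a tree, if two support vertices are at distance $2$, their common neighbor together with their leaves forces a configuration where one can build a larger $2$-packing than any $3$-packing, breaking the equality $\rho_2 = \rho_3$; conversely the conditions are engineered to forbid exactly this. So the forward direction (assuming $\rho_2 = \rho_3$, derive the three bullets) requires a careful case analysis showing that any violation of a bullet lets me exhibit a $2$-packing strictly larger than every $3$-packing. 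I would argue the contrapositive for each bullet: if $S(T)$ is not independent, or $R(T)$ is not a $3$-packing, or $S(T)\cup R(T)$ fails to dominate, I construct a $2$-packing exceeding $\rho_3(T)$.

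For the converse direction (the three bullets imply $\rho_2 = \rho_3$), my plan is to show $S(T) \cup R(T)$ is a $3$-packing that is also a dominating set; by Prop.~\ref{p1} any dominating set has size at least $\rho_2(T) \geq \rho_3(T)$, and since it is a $3$-packing its size is at most $\rho_3(T)$, pinching $|S(T) \cup R(T)| = \rho_2(T) = \rho_3(T)$. The step needing care is verifying the $3$-packing property of the mixed set $S(T) \cup R(T)$: I must check all three pairing types (support–support, support–$R(T)$, and $R(T)$–$R(T)$) achieve mutual distance at least $3$, using the tree structure to promote the independence of $S(T)$ to a distance-$3$ guarantee once the other two conditions hold. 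I would lean on the fact that a tree has a unique path between any two vertices, so distances are additive along the tree and a leaf hanging off a support vertex cannot create short-cut paths. Overall I expect the converse to be the cleaner direction and the forward packing-construction to demand the most bookkeeping.
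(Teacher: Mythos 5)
Your opening reduction is sound: by Propositions~\ref{p2} and~\ref{p4}, $\gamma_{\rm pr}(T)=2\gamma(T)$ is indeed equivalent to $\rho_3(T)=\rho_2(T)$, and your pinching scheme for the ``three bullets $\Rightarrow$ equality'' direction is the same skeleton the paper uses. But the step you yourself flag as the one ``needing care'' is not merely delicate --- it is false, and your proposed resolution would contradict the very theorem you are proving. Under the three conditions, $S(T)\cup R(T)$ need \emph{not} be a $3$-packing: take $T=P_5$ with vertices $a,s_1,c,s_2,b$. Here $S(T)=\{s_1,s_2\}$ is independent but $d(s_1,s_2)=2$, $R(T)=\emptyset$, and $S(T)\cup R(T)$ dominates; moreover $\gamma(P_5)=\rho_2(P_5)=\rho_3(P_5)=2$ and $\gamma_{\rm pr}(P_5)=4=2\gamma(P_5)$. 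So the three conditions \emph{do} allow two supports at distance exactly $2$, the equality $\rho_2=\rho_3$ survives, and yet $S(T)\cup R(T)$ is not a $3$-packing (a $3$-packing needs pairwise distance at least $4$, not $3$ as you wrote). Consequently your claimed mechanism --- that supports at distance $2$ force $\rho_2(T)>\rho_3(T)$, and that ``the conditions are engineered to forbid exactly this'' --- is wrong. The paper's fix is a leaf-swap you are missing: for each $x\in S(T)$ choose one leaf adjacent to $x$, forming $S'(T)$ with $|S'(T)|=|S(T)|$. Then $S'(T)\cup R(T)$ \emph{is} a $3$-packing (leaves of distinct nonadjacent supports are at distance $\geq 4$; a leaf and a vertex of $R(T)$ are at distance $\geq 3+1=4$ because every path to a leaf passes through its support), and the pinch
\[
\gamma(T)\le |S(T)\cup R(T)| = |S'(T)\cup R(T)| \le \rho_3(T)\le \rho_2(T)\le \gamma(T)
\]
goes through verbatim.

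The other direction of your proposal is also incomplete. You plan a contrapositive ``any violated bullet yields a $2$-packing strictly larger than every $3$-packing,'' but you give no construction, and the only concrete mechanism you hint at is the false claim above. The paper does not argue this direction through packings at all: it invokes Proposition~\ref{p5} to get independence of $S(T)$ (since some minimum dominating set contains all supports), and then, for the other two bullets, it converts hypothetical bad configurations --- a minimum dominating set containing a vertex of $S_2(T)$, or two vertices of $R(T)$ at distance at most $3$ in a minimum dominating set --- into explicit paired dominating sets of cardinality at most $2\gamma(T)-2<\gamma_{\rm pr}(T)$, a contradiction. Some argument of this kind (or a genuine packing construction, which you have not supplied and which would need to handle cases your sketch does not address) is required; as written, the proposal proves neither implication.
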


\begin{proof}
Assume first that $S(T)$ is an independent set, $R(T)$ is a 3-packing in a tree $T$ and $S(T)\cup R(T)$ is a dominating set in $T$. Then since $S(T)\cup R(T)$ is a dominating set in $T$, $\gamma(T)\leq |S(T)\cup R(T)|$. Moreover, $S(T)\cup R(T)$ is an independent set; hence, $i(T)\leq |S(T)\cup R(T)|$. For each vertex $x$ of $S(T)$ choose any leaf adjacent to $x$ and denote such a set of leaves by $S'(T)$. Thus $|S(T)| = |S'(T)|$. Since $S'(T)\cup R(T)$ is a 3-packing, $|S(T)\cup R(T)|\leq \rho_3(T)$. Therefore, since $ \rho_{k+1}(G)\leq \rho_k(G)$ and by Proposition~\ref{p2},
\[
\gamma(T)\leq |S(T)\cup R(T)|\leq \rho_3(T)\leq \rho_2(T)\leq  \gamma(T).
\]
Hence, all the above inequalities are indeed satisfied by equality. By Proposition~\ref{p4}, $\rho_3(T) = \frac 12\gamma_{pr}(T)$. Hence, $2\gamma(T) = \gamma_{pr}(T)$ and we conclude that $T$ is a $(\gamma_{pr},2\gamma)$--tree.

Now we prove that if $T$ is a $(\gamma_{pr},2\gamma)$--tree, then $S(T)$ is an independent set, $R(T)$ is a 3-packing, and $S(T)\cup R(T)$ is a dominating set of $T$.

By Proposition~\ref{p5}, if $T$ is a $(\gamma_{pr},2\gamma)$--tree, then each minimum dominating set of $T$ is independent. Since in any non-trivial tree there exists a minimum dominating set containing $S(T)$, we conclude that in every $(\gamma_{pr},2\gamma)$--tree $S(T)$ is independent. In what follows, we only consider minimum dominating sets not containing leaves.

Denote by $S_1(T)$ the set of neighbours of $S(T)$ which are not leaves and denote by $S_k(T)$ the set of vertices of distance~$k$ from $S(T)$, where $k \ge 2$. 

If $S(T)$ is an independent dominating set in $T$, then the statement is obviously true. Therefore, we assume that there exists a vertex which is not dominated by $S(T)$. Suppose that $v\in S_2(T)$ belongs to a minimum dominating set denoted by $D$. Now construct a set $P$ as follows. Root $T$ at $v$. For each $u\in D-\{v\}$ add to $P$ both $u$ and its parent in the rooted tree. If two vertices share the same parent, take the parent and any other vertex adjacent to one of the two vertices. Since two vertices in $D$ that are sharing the same parent cannot both be leaves, this is always possible. Clearly, $P$ is a dominating set and $v\notin P$. Moreover, each vertex of $P$ is paired with another vertex of $P$. Hence, $P$ is a paired dominating set of $T$. Then,
\[
\gamma_{\rm pr}(T)\leq |P| = 2|D|-2=2\gamma(T)-2<\gamma_{\rm pr}(T),
\]
 a contradiction. Therefore, no vertex of $S_2(T)$ belongs to a minimum dominating set $D$ of $T$. Hence, if $S_2(T)\neq \emptyset$, each vertex of $S_2(T)$ is adjacent to a vertex of $S_3(T)$, since otherwise $D$ is not a dominating set of $T$. Since $S_3(T) \subseteq R(T)$, we conclude that $S(T)\cup R(T)$ is a dominating set of $T$. 

Now we prove that if $T$ is a ($\gamma_{pr},2\gamma$)-tree, then $R(T)$ is a 3-packing. By Proposition \ref{p5}, two adjacent vertices of $R(T)$ cannot both be in the same minimum dominating set.


Assume $x$ and $y$ are vertices of $R(T)$ such that $(x,u,v,y)$ is a path and $x,y$ belong to a minimum dominating set of $T$, say $D$. Then $u,v\notin D$. Denote by $T_x$ the subtree of $T-xu$ containing $x$ rooted in $x$ and denote by $T_y$ the subtree of $T-yv$ containing $y$ rooted in $y$. Let $D_x=D\cap V(T_x)$ and $D_y=D\cap V(T_y)$. Then $D=D_x\cup D_y\cup D_r$, where $D_r$ are the vertices belonging to $D$ from the subtress rooted in $u$ and $v$. Let $D^p_x$ be a paired dominating set of $T_x-x$ obtained from $D_x$ by adding to $D_x$ a parent  of each element except of the parent of $x$ (there is no parent of $x$ in $T_x$ because $x$ is the root in $T_x$). Define $D^p_y$ analogously. Let $D^p_r$ be a paired dominating set of $T-(V(T_x)\cup V(T_y))$ of cardinality $2|D_r|$. Then $D^p_x\cup D^p_y\cup D^p_r \cup \{u,v\}$ is a paired dominating set of $T$ of cardinality 
\[
2(|D_x|-1) + 2(|D_y|-1) +2|D_r|+2 < 2|D|,
\]
 a contradiction. By similar arguments we again obtain a contradiction when $x$ and $y$ are vertices of $R(T)$ such that $(x,u,y)$ is a path and $x,y\in D$. Henceforth, if $x$ and $y$ are vertices of $R(T)$, then the distance between $x$ and $y$ is at least~4, that is, $R(T)$ is a 3-packing.

\end{proof}

For any tree $T$, constructing an $S(T)$-set and $R(T)$-set as well as checking whether $S(T)$ is independent, $R(T)$ is a 3-packing and $S(T)\cup R(T)$ is a dominating set can all be easily done by an algorithm with polynomial time complexity. Hence Theorem~\ref{th-trees} implies a polynomial-time recognition algorithm of trees with $\gamma_{pr}(G)=2\gamma(G)$.

\section{Packing number and paired-domination}\label{rho}
In this section, we focus on the relation between the packing number and the paired-domination number. First we start by proving that it is NP-hard to decide for a given graph $G$, whether $\gp(G)=2\rho_4 (G)$. For clarity, here we recall the result of Bresar et.al.:
\begin{prop}{\rm \cite{Bresar}}\label{p3}
For any graph $G$ without isolated vertices, $\gamma_{\rm pr}(G)\geq 2\rho_3(G)$.
\end{prop}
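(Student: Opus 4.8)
The plan is to show that, in a minimum paired dominating set, each matched pair can be ``charged'' by at most one vertex of a maximum $3$-packing; consequently the number of matched pairs, namely $\gamma_{\rm pr}(G)/2$, is at least $\rho_3(G)$, which is exactly the asserted inequality.

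First I would fix a minimum paired dominating set $D$ of $G$ together with a perfect matching $M$ of the induced subgraph $\langle D\rangle$, and write the matched pairs as $\{d_1,d_1'\},\ldots,\{d_t,d_t'\}$, so that $\gamma_{\rm pr}(G)=2t$. I would also fix a maximum $3$-packing $A$, so that $|A|=\rho_3(G)$ and any two distinct vertices of $A$ are at distance at least $4$. Since $D$ is in particular a dominating set, every vertex $a\in A$ satisfies $N[a]\cap D\neq\emptyset$; I would choose one witness $f(a)\in N[a]\cap D$ and let $p(a)$ denote the matched pair of $M$ that contains $f(a)$.

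The core of the argument is to prove that the assignment $a\mapsto p(a)$ is injective. Suppose, for contradiction, that distinct $a,b\in A$ satisfy $p(a)=p(b)=\{d_i,d_i'\}$. If $f(a)=f(b)$, then $a$ and $b$ both lie in the closed neighborhood $N[f(a)]$, so $d(a,b)\le 2$. If instead $f(a)\neq f(b)$, then $\{f(a),f(b)\}=\{d_i,d_i'\}$ is an edge of $G$, and combining $d(a,f(a))\le 1$, $d(f(a),f(b))=1$, and $d(f(b),b)\le 1$ via the triangle inequality for graph distance yields $d(a,b)\le 3$. In either case $d(a,b)\le 3$, contradicting $d(a,b)\ge 4$. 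Hence each matched pair receives the witness of at most one vertex of $A$, so $\rho_3(G)=|A|\le t=\gamma_{\rm pr}(G)/2$, giving $\gamma_{\rm pr}(G)\ge 2\rho_3(G)$.

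The step I expect to demand the most care is the distance bookkeeping in the injectivity argument, particularly the subcase where $f(a)$ and $f(b)$ are the two distinct endpoints of a single matched edge: it is here that the edge inside $D$ contributes the crucial ``$+1$'' that forces $d(a,b)\le 3$. This is precisely where the definition of a $3$-packing (pairwise distance at least $4$, i.e.\ $k+1$ with $k=3$) is invoked, and it explains why the bound naturally involves $\rho_3$ rather than $\rho_2$ or $\rho_4$.
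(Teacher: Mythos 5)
Your proof is correct. Note that the paper does not prove this proposition at all---it is quoted from the reference of Bre\v{s}ar, Henning, and Rall---so there is no internal argument to compare against; your injection of the $3$-packing into the set of matched pairs (using that two vertices dominated by the same pair are at distance at most $3$) is precisely the standard proof of this inequality, and every step, including the well-definedness of the witness map and the distance bookkeeping, is sound.
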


\begin{theorem}\label{NPPD}
It is NP-hard for a given graph $G$ to determine whether $\gamma_{\rm pr}(G)=2\rho_4(G)$.
\end{theorem}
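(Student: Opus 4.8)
The plan is to follow the same reduction-from-3SAT template used in Theorems~\ref{NPP} and~\ref{NPH}, but now engineering the gadgets so that the relevant quantity to match is $2\rho_4(G)$ rather than $2\gamma(G)$ or $\gamma_t(G)$. Since Proposition~\ref{p3} already gives the universal inequality $\gamma_{\rm pr}(G)\ge 2\rho_3(G)\ge 2\rho_4(G)$ (using $\rho_{k+1}\le\rho_k$), equality $\gamma_{\rm pr}(G)=2\rho_4(G)$ is the ``hardest to achieve'' endpoint, exactly analogous to the $2\gamma$ case. So first I would fix, for each literal $u_i$, a literal-gadget $G(u_i)$ containing a vertex $u_i$ and a vertex $u_i'$ representing the literal and its negation, attached to enough pendant structure that every paired dominating set must spend a fixed number of vertices inside the gadget, and that a largest $4$-packing can pick up a controlled contribution from each gadget. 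For each clause $C_j$ I would attach a clause vertex (or short clause path) joined to the three literal-vertices that satisfy it, mirroring Fig.~\ref{f3}.

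Next I would nail down the two quantities on the constructed graph $G$ independently of the truth assignment. For $\gamma_{\rm pr}(G)$ I would argue, as in the earlier proofs, that each literal gadget forces a fixed number of paired-dominating vertices (because only the two designated vertices $u_i,u_i'$ have neighbors leaving the gadget, and they are nonadjacent, so a perfect matching inside $\langle D\rangle$ costs extra), yielding a clean lower bound $\gamma_{\rm pr}(G)\ge 2N$ for an appropriate $N$ depending only on $n$ (and possibly $m$), together with a matching construction showing $\gamma_{\rm pr}(G)=2N$ exactly. The key design constraint is that the distances in the gadgets be large enough that a $4$-packing, which must keep chosen vertices pairwise at distance $\ge 5$, can place at most one vertex per gadget while still being able to use a clause-vertex only when the clause is satisfied. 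The whole point is to arrange $\rho_4(G)=N$ precisely when $E$ is satisfiable, so that $2\rho_4(G)=2N=\gamma_{\rm pr}(G)$, and $\rho_4(G)<N$ (hence $2\rho_4(G)<\gamma_{\rm pr}(G)$) otherwise.

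The forward direction would then take a satisfying assignment and exhibit a $4$-packing of size $N$: one representative vertex per literal gadget chosen according to the truth value, plus a packing vertex harvested from each satisfied clause region, verifying that all chosen vertices are pairwise at distance at least $5$. The reverse direction would take a maximum $4$-packing of size $N$ and read off a satisfying assignment: the distance-$\ge 5$ constraint should force each gadget's packing vertex to correspond to a consistent literal value, and the ability to reach the required count $N$ across all clauses should certify that every clause is satisfied. I would finish by noting $G$ has order polynomial in $n$ and $m$ and is constructible in polynomial time.

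I expect the main obstacle to be the geometry of the $4$-packing: because $4$-packing forbids vertices within distance $4$ of each other, the gadgets and the literal-to-clause edges must be spaced so that (i) choosing a packing vertex inside a literal gadget does not inadvertently block a legitimate packing vertex in an adjacent clause region, and (ii) an \emph{unsatisfied} clause genuinely prevents attaining the full count $N$. Getting these two spacing conditions to hold simultaneously — tight enough to force the correspondence with satisfiability yet loose enough to admit the size-$N$ packing when $E$ is satisfiable — is the delicate part, and is exactly where the explicit distances in the figures must be chosen with care; the accompanying Theorem for $\rho_3$ presumably requires a separate gadget with shorter spacing, since the distance threshold changes from $5$ to $4$.
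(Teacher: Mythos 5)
Your high-level plan inverts the roles that the paper assigns to the two parameters, and the inversion is exactly where the argument breaks. You propose to pin $\gamma_{\rm pr}(G)$ at a fixed value $2N$ and let $\rho_4(G)$ encode satisfiability ($\rho_4(G)=N$ iff $E$ is satisfiable). The obstruction is one of monotonicity: in a $4$-packing, a vertex of a clause region is eligible exactly when \emph{every} vertex within distance $4$ of it is unchosen, whereas clause satisfaction asks that \emph{some} incident literal vertex be chosen; these have opposite polarity. With your adjacencies (the clause vertex joined to the literal vertices that satisfy the clause, mirroring Fig.~\ref{f3}), putting a true-literal vertex into the packing \emph{blocks} the clause vertex rather than enabling it, so a clause region yields an extra packing vertex precisely when the clause is \emph{unsatisfied} by the assignment read off from the packing. (Attaching a pendant path to the clause vertex does not help: if its far end is at distance $\ge 5$ from the literal vertices, it is always available regardless of the assignment and contributes nothing to the encoding.) Consequently your construction would give $\rho_4(G)=N$ iff some assignment falsifies \emph{every} clause simultaneously. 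That condition is a conjunction of unit constraints (``each occurring literal is false'') and is decidable in linear time; indeed it fails automatically under the paper's standing assumption that every variable occurs both positively and negatively. So the reduction as designed cannot establish NP-hardness, and the forward direction as you state it (``a packing vertex harvested from each satisfied clause region'') cannot be carried out at all.

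The paper's proof does the exact opposite: it uses dense gadgets (the octahedron-like $G(u_i)$ of Fig.~\ref{fpd4} and clause gadgets containing a $K_6$) to make the packing number \emph{rigid}, namely $\rho_4(G)=n+3m$ for every instance, and it pushes all dependence on the formula into $\gamma_{\rm pr}(G)$, showing that $\gamma_{\rm pr}(G)=2n+6m=2\rho_4(G)$ iff the instance has a not-all-equal satisfying assignment and $\gamma_{\rm pr}(G)>2n+6m$ otherwise; the reduction is from NAE3SAT rather than 3SAT because the clause gadget must treat the ``true side'' and ``false side'' of each clause symmetrically. If you insist on keeping $\rho_4$ as the varying quantity, the standard way around the monotonicity obstruction is an independent-set style reduction in which the literal \emph{occurrences} themselves are the packing candidates, clustered per clause so that at most one per clause can be chosen, with complementary occurrences placed at distance at most $4$ to force consistency; but then you would still have to arrange $\gamma_{\rm pr}(G)$ to equal exactly twice the target packing size, which is the delicate step your sketch leaves open.
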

\begin{proof}
We describe a reduction from NAE3SAT, which was proven to be \emph{NP}-complete in \cite{GJ79}, to the considered problem. The formula in NAE3SAT is given in conjunctive normal form, where each clause contains three literals. 

Given an instance $E$, the set of literals $U=\{u_1,u_2,\ldots,u_n\}$ and the set of clauses $C=\{c_1,c_2,\ldots,c_m\}$ of NAE3SAT, we construct a graph $G$ whose order is polynomially bounded in terms of $n$ and $m$ such that the three values in each clause are not all equal to each other if and only if $\gamma_{\rm pr}(G) =2\rho_4(G)$.

For each literal $u_i$ we construct a copy of the graph $G(u_i)$, as in Fig.~\ref{fpd4}. 
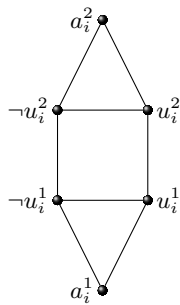
\begin{figure}[h!]
\begin{center}
\begin{tikzpicture}[scale=0.6]

\draw (0,2)--(0,4)--(1,6)--(2,4)--(2,2)--(1,0)--(0,2)--(2,2);
\draw (0,4) -- (2,4);

\shade[ball color=black] (1, 0) circle (3pt)node[left] {$a_i^1$} ;
\shade[ball color=black] (1,6) circle (3pt)node[left] {$a_i^2$} ;
\shade[ball color=black] (0,2) circle (3pt)node[left] {$\lnot u_i^1$};
\shade[ball color=black] (0,4) circle (3pt) node[left] {$\lnot u_i^2$};
\shade[ball color=black] (2,2) circle (3pt)node[right] {$u_i^1$};
\shade[ball color=black] (2,4) circle (3pt) node[right] {$u_i^2$};
\end{tikzpicture}
\end{center}
\caption{The graph $G(u_i)$}\label{fpd4}
\end{figure}

For every clause $C_j$, we create a copy $G(C_j)$ of the graph in Fig.~\ref{fpd5}. All graphs $G(u_i)$ and $G(C_j)$ created so far are disjoint. 

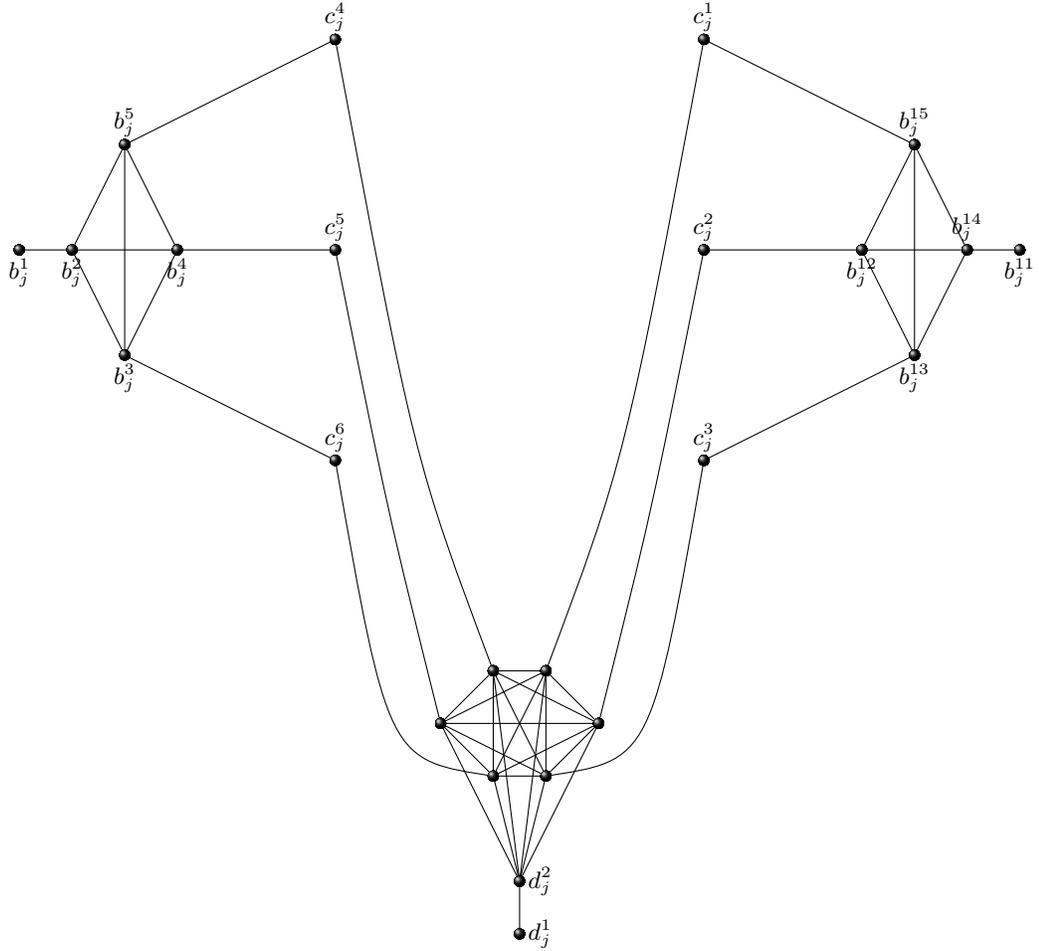
\begin{figure}[h!]
\begin{center}
\begin{tikzpicture}[scale=0.7]
\draw (6,14) .. controls(7.5,6)..  (9,2);
\draw (6,10) .. controls(7,5)..  (8,1);
\draw (6,6) .. controls(7,0.3)..  (9,0);

\draw (13,14) .. controls(11.5,6)..  (10,2);
\draw (13,10) .. controls(12,5)..  (11,1);
\draw (13,6) .. controls(12,0.3)..  (10,0);

\draw(6,14)--(2,12)--(3,10)--(2,8)--(2,12)--(1,10)--(0,10);
\draw(3,10)--(1,10)--(2,8);
\draw(3,10)--(6,10);
\draw(6,6)--(2,8);

\draw(13,14)--(17,12)--(18,10)--(17,8)--(16,10)--(19,10);
\draw(17,12)--(17,8)--(13,6);
\draw(17,12)--(16,10)--(13,10);

\draw(8,1)--(9,0)--(9,2)--(10,0)--(10,2)--(11,1)--(8,1)--(9,2)--(10,2)--(8,1)--(10,0)--(11,1)--(9,0)--(10,0);
\draw(9,2)--(11,1);
\draw(10,2)--(9,0);

\draw(9,0)--(9.5,-2)--(8,1);
\draw(10,0)--(9.5,-2)--(10,2);
\draw(9,2)--(9.5,-2)--(11,1);
\draw(9.5,-2)--(9.5,-3);

\shade[ball color=black] (6,14) circle (3pt)node[above] {$c_j^4$} ;
\shade[ball color=black] (6,10) circle (3pt)node[above] {$c_j^5$} ;
\shade[ball color=black] (6,6) circle (3pt)node[above] {$c_j^6$} ;

\shade[ball color=black] (13,14) circle (3pt)node[above] {$c_j^1$} ;
\shade[ball color=black] (13,10) circle (3pt)node[above] {$c_j^2$} ;
\shade[ball color=black] (13,6) circle (3pt)node[above] {$c_j^3$} ;

\shade[ball color=black] (0,10) circle (3pt)node[below] {$b_j^1$} ;
\shade[ball color=black] (1,10) circle (3pt)node[below] {$b_j^2$} ;
\shade[ball color=black] (2,8) circle (3pt)node[below] {$b_j^3$} ;
\shade[ball color=black] (3,10) circle (3pt)node[below] {$b_j^4$};
\shade[ball color=black] (2,12) circle (3pt) node[above] {$b_j^5$};

\shade[ball color=black] (19,10) circle (3pt)node[below] {$b_j^{11}$} ;
\shade[ball color=black] (16,10) circle (3pt)node[below] {$b_j^{12}$} ;
\shade[ball color=black] (17,8) circle (3pt)node[below] {$b_j^{13}$} ;
\shade[ball color=black] (18,10) circle (3pt)node[above] {$b_j^{14}$};
\shade[ball color=black] (17,12) circle (3pt) node[above] {$b_j^{15}$};

\shade[ball color=black] (8,1) circle (3pt) ;
\shade[ball color=black] (9,0) circle (3pt) ;
\shade[ball color=black] (9,2) circle (3pt) ;
\shade[ball color=black] (10,0) circle (3pt) ;
\shade[ball color=black] (10,2) circle (3pt) ;
\shade[ball color=black] (11,1) circle (3pt) ;

\shade[ball color=black] (9.5,-2) circle (3pt) node[right] {$d_j^2$};
\shade[ball color=black] (9.5,-3) circle (3pt) node[right] {$d_j^1$};

\end{tikzpicture}
\end{center}
\caption{The graph $G(C_j)$}\label{fpd5}
\end{figure}

For every clause $C_j$ with literals $x, y$ and $z$, we create the six edges $c_j^1x^1, c_j^1x^2,$ 
$c_j^2y^1, c_j^2y^2, c_j^3z^1$ and $c_j^3z^2$ and another six edges $c_j^4\lnot x^1, c_j^4\lnot x^2, c_j^5\lnot y^1,$ $c_j^5\lnot y^2, c_j^6\lnot z^1$ and $c_j^6\lnot z^2$. If, for example, $C_1 = \lnot u_1 \lor u_2 \lor  u_3$, then these edges are $c_1^1\lnot u_1^1, c_1^1\lnot u_1^2,$ $c_1^2u_2^1, c_1^2u_2^2, c_1^3u_3^1, c_1^3u_3^2, c_1^4 u_1^1, c_1^4 u_1^2, c_1^5\lnot u_2^1, c_1^5\lnot u_2^2, c_1^6\lnot u_3^1$ and $c_1^6\lnot u_3^2$  as shown in Fig.~\ref{fpd6}. This completes the description of $G$. The obtained graph has $6n+24m$ vertices and $8n+54m$ edges.

\begin{figure}[h!]
\begin{center}
\begin{tikzpicture}[scale=0.7]
\draw (6,14) .. controls(7.5,6)..  (9,2);
\draw (6,10) .. controls(7,5)..  (8,1);
\draw (6,6) .. controls(7,0.3)..  (9,0);

\draw (13,14) .. controls(11.5,6)..  (10,2);
\draw (13,10) .. controls(12,5)..  (11,1);
\draw (13,6) .. controls(12,0.3)..  (10,0);

\draw(6,14)--(2,12)--(3,10)--(2,8)--(2,12)--(1,10)--(0,10);
\draw(3,10)--(1,10)--(2,8);
\draw(3,10)--(6,10);
\draw(6,6)--(2,8);

\draw(13,14)--(17,12)--(18,10)--(17,8)--(16,10)--(19,10);
\draw(17,12)--(17,8)--(13,6);
\draw(17,12)--(16,10)--(13,10);

\draw(8,1)--(9,0)--(9,2)--(10,0)--(10,2)--(11,1)--(8,1)--(9,2)--(10,2)--(8,1)--(10,0)--(11,1)--(9,0)--(10,0);
\draw(9,2)--(11,1);
\draw(10,2)--(9,0);

\draw(9,0)--(9.5,-2)--(8,1);
\draw(10,0)--(9.5,-2)--(10,2);
\draw(9,2)--(9.5,-2)--(11,1);
\draw(9.5,-2)--(9.5,-3);

\draw(10,16)--(9.5,17)--(9,16)--(9,15)--(9.5,14)--(10,15)--(10,16)--(9,16);
\draw(10,15)--(9,15)--(6,14)--(9,16);
\draw(10,16)--(13,14)--(10,15);

\draw(10,11)--(9.5,12)--(9,11)--(9,10)--(9.5,9)--(10,10)--(10,11)--(9,11);
\draw(10,10)--(9,10)--(6,10)--(9,11);
\draw(10,11)--(13,10)--(10,10);

\draw(10,6)--(9.5,7)--(9,6)--(9,5)--(9.5,4)--(10,5)--(10,6)--(9,6);
\draw(10,5)--(9,5)--(6,6)--(9,6);
\draw(10,6)--(13,6)--(10,5);

\shade[ball color=black] (9.5,17) circle (3pt) node[above] {$a_1^2$};
\shade[ball color=black] (9,16) circle (3pt)node[above] {$u_1^2$} ;
\shade[ball color=black] (10,16) circle (3pt)node[above] {$\lnot u_1^2$} ;
\shade[ball color=black] (9,15) circle (3pt)node[below] {$u_1^1$} ;
\shade[ball color=black] (10,15) circle (3pt)node[below] {$\lnot u_1^1$} ;
\shade[ball color=black] (9.5,14) circle (3pt) node[below] {$a_1^1$};

\shade[ball color=black] (9.5,12) circle (3pt) node[above] {$a_2^2$};
\shade[ball color=black] (9,11) circle (3pt)node[above] {$\lnot u_2^2$} ;
\shade[ball color=black] (10,11) circle (3pt)node[above] {$u_2^2$} ;
\shade[ball color=black] (9,10) circle (3pt)node[below] {$\lnot u_2^1$} ;
\shade[ball color=black] (10,10) circle (3pt)node[below] {$u_2^1$} ;
\shade[ball color=black] (9.5,9) circle (3pt) node[below] {$a_2^1$};

\shade[ball color=black] (9.5,7) circle (3pt) node[above] {$a_3^2$};
\shade[ball color=black] (9,6) circle (3pt)node[above] {$\lnot u_3^2$} ;
\shade[ball color=black] (10,6) circle (3pt)node[above] {$ u_3^2$} ;
\shade[ball color=black] (9,5) circle (3pt)node[below] {$\lnot u_3^1$} ;
\shade[ball color=black] (10,5) circle (3pt)node[below] {$u_3^1$} ;
\shade[ball color=black] (9.5,4) circle (3pt) node[below] {$a_3^1$};

\shade[ball color=black] (6,14) circle (3pt)node[above] {$c_j^4$} ;
\shade[ball color=black] (6,10) circle (3pt)node[above] {$c_j^5$} ;
\shade[ball color=black] (6,6) circle (3pt)node[above] {$c_j^6$} ;

\shade[ball color=black] (13,14) circle (3pt)node[above] {$c_j^1$} ;
\shade[ball color=black] (13,10) circle (3pt)node[above] {$c_j^2$} ;
\shade[ball color=black] (13,6) circle (3pt)node[above] {$c_j^3$} ;

\shade[ball color=black] (0,10) circle (3pt)node[below] {$b_j^1$} ;
\shade[ball color=black] (1,10) circle (3pt)node[below] {$b_j^2$} ;
\shade[ball color=black] (2,8) circle (3pt)node[below] {$b_j^3$} ;
\shade[ball color=black] (3,10) circle (3pt)node[below] {$b_j^4$};
\shade[ball color=black] (2,12) circle (3pt) node[above] {$b_j^5$};

\shade[ball color=black] (19,10) circle (3pt)node[below] {$b_j^{11}$} ;
\shade[ball color=black] (16,10) circle (3pt)node[below] {$b_j^{12}$} ;
\shade[ball color=black] (17,8) circle (3pt)node[below] {$b_j^{13}$} ;
\shade[ball color=black] (18,10) circle (3pt)node[above] {$b_j^{14}$};
\shade[ball color=black] (17,12) circle (3pt) node[above] {$b_j^{15}$};

\shade[ball color=black] (8,1) circle (3pt) ;
\shade[ball color=black] (9,0) circle (3pt) ;
\shade[ball color=black] (9,2) circle (3pt) ;
\shade[ball color=black] (10,0) circle (3pt) ;
\shade[ball color=black] (10,2) circle (3pt) ;
\shade[ball color=black] (11,1) circle (3pt) ;

\shade[ball color=black] (9.5,-2) circle (3pt) node[right] {$d_j^2$};
\shade[ball color=black] (9.5,-3) circle (3pt) node[right] {$d_j^1$};

\end{tikzpicture}
\end{center}
\caption{The edges between $G(C_1)$ and $G(u_1)\cup G(u_2)\cup G(u_3)$ for the clause $C_1 = \lnot u_1 \lor u_2 \lor u_3$}\label{fpd6}
\end{figure}
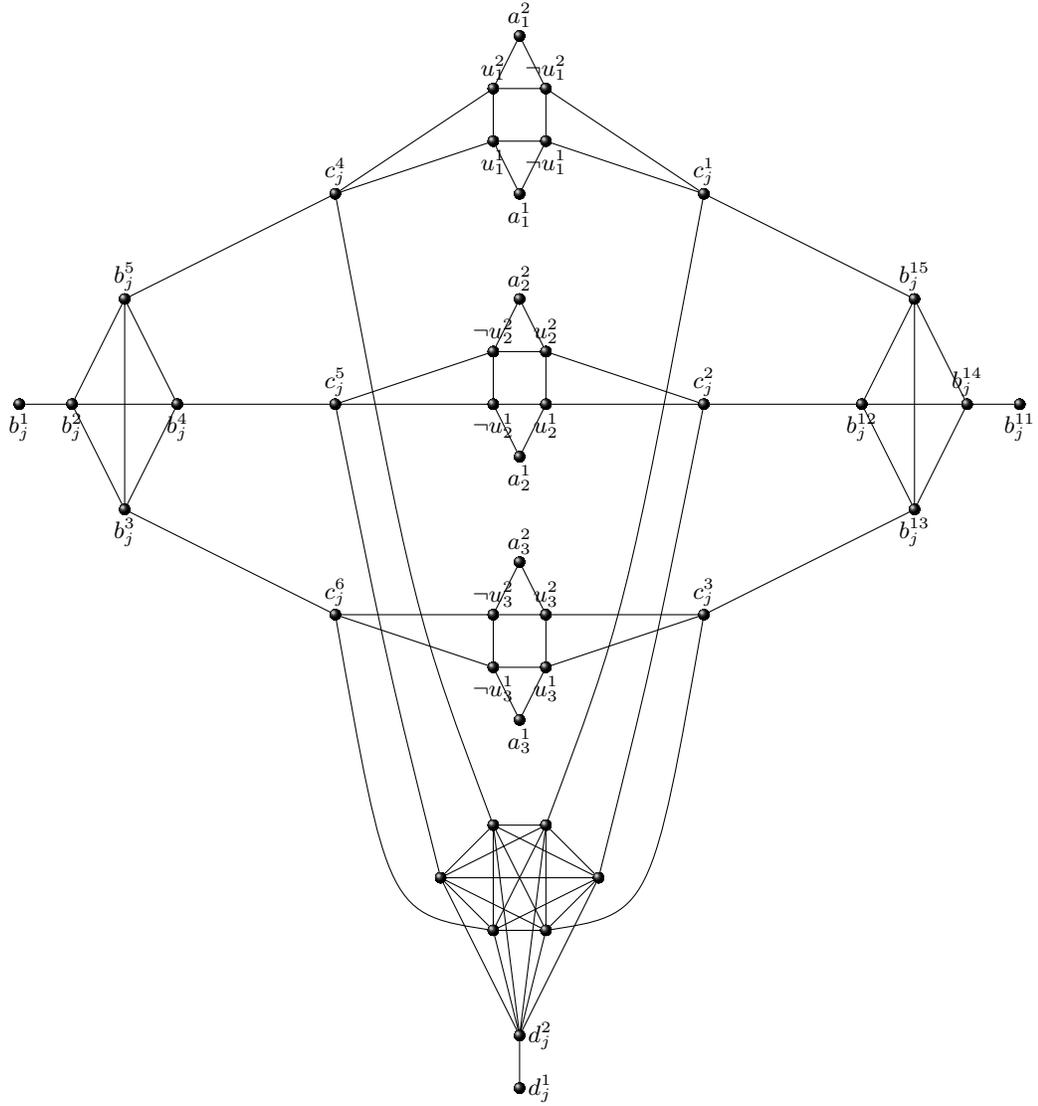

Let $R$ be a maximum $\rho_4(G)$-set. Without loss of generality we may assume that each vertex of degree~1 belongs to $R$. Then no other vertex of $G(C_j)$ for $j=1,2,\dots , m$ belongs to $R$. However, then either $a_i^1\in R$ or $a_i^2\in R$ for $i=1,2,\dots, n$. Therefore, $\rho_4(G)\geq n+3m$ and by Proposition~\ref{p3}, $\gamma_{\rm pr}(G)\geq 2\rho_3(G)\geq 2\rho_4(G)\geq 2n+6m$. On the other hand, by the construction of $G$ it is not possible to obtain a 3-packing set of cardinality greater than $n+3m$. Therefore, $\rho_4(G)=n+3m$.

First, assume that there exists an instance $E$ such that the three values in each clause are not all equal to each other and consider a satisfying truth assignment. By the construction of $G(u_i)$, every minimum paired dominating set of $G$ contains at least two vertices of this
subgraph. Let $D$ be a paired-dominating set of $G$ such that it contains vertices of $G(u_i)$ that correspond to true literals. For example, if $u_1$ is true in $E$, then $u_1^1, u_1^2\in D$ and if $\lnot u_1$ is true, then $\lnot u_1^1, \lnot u_1^2\in D$. Since the three values in each clause are not all equal to each other, at least one vertex of $c_j^1, c_j^2, c_j^3$ is adjacent to a false literal, and by symmetry the same is true for $c_j^4, c_j^5, c_j^6$. Moreover, if two vertices of $c_j^1, c_j^2, c_j^3$ are not dominated by a true literal vertex, then exactly one vertex of $c_j^4, c_j^5, c_j^6$ is not dominated by a true literal vertex, and vice versa. Without loss of generality we assume $c_j^1, c_j^3$ and $c_j^5$ are not dominated by a true literal vertex. Then let $b_j^{13}, b_j^{14}\in D$, $b_j^2, b_j^4\in D$ and $d_j^2, d\in D$, where $d$ is a vertex of $K_6$ adjacent to $c_j^1$. Then $D$ is a paired dominating set of cardinality $2n+6m$ and we conclude that $D$ is a minimum paired dominating set of $G$. Therefore, if there exists an instance such that the three values in each clause are not all equal to each other, then $2\rho_4(G)=\gamma_{\rm pr}(G)$.

Assume now that in each instance $E$ there exists a clause in which all three values are equal to each other. Let $C_k$ be such a clause and $D$ be a minimum paired-dominating set of $G$. Then at least two vertices of each $G(u_i)$ belong to $D$. Let $D$ contain vertices of $G(u_i)$ corresponding to true literals. Then either $c_k^1, c_k^2, c_k^3$ are all dominated by true literal vertices and each of $c_k^4, c_k^5, c_k^6$ is not or vice versa. In both cases, $b_k^{14}$ belongs to $D$ and it can be paired in $D$ with any neighbour. Similarly, $b_k^2\in D$ and without loss of generality let it be paired with $b_k^4$ in $D$. Next let $d_k^2\in D$ be paired with a vertex of $K_6$ adjacent to $c_k^6$. In this situation we still need two vertices to dominate $c_k^4$. This implies that $|D|> 2n+6m$, implying that if in each instance $E$ there exists a clause in which all three values are equal to each other, then $2\rho_4(G)<\gamma_{\rm pr}(G)$.

\end{proof}

Now we prove a similar result for $\rho_3$:

\begin{theorem}\label{NPPD3}
It is NP-hard for a given graph $G$ to determine whether $\gamma_{\rm pr}(G)=2\rho_3(G)$.
\end{theorem}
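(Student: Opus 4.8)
The plan is to adapt the reduction used in Theorem~\ref{NPPD} from NAE3SAT, modifying the gadgets so that the relevant packing number becomes $\rho_3$ rather than $\rho_4$. The overall strategy stays the same: build a literal gadget $G(u_i)$ for each variable, encoding its two truth values by two ``sides'' of vertices, and a clause gadget $G(C_j)$ that forces extra vertices into any paired dominating set precisely when all three literals of $C_j$ receive the same truth value. The target equation is $\gamma_{\rm pr}(G)=2\rho_3(G)$, and since Proposition~\ref{p3} already gives $\gamma_{\rm pr}(G)\ge 2\rho_3(G)$ for all graphs without isolated vertices, the reduction only needs to control when this inequality is tight.

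First I would fix the gadgets so that the maximum $3$-packing $\rho_3(G)$ has a clean, structurally forced value of the form $n + \alpha m$ for some constant $\alpha$; this requires recomputing the pairwise distances in the new gadgets, since shrinking from $k=4$ to $k=3$ loosens the packing constraint and lets more vertices coexist in a packing. The key design goal is that a maximum $3$-packing is obtained by taking one ``anchor'' vertex per literal gadget together with a fixed number of degree-one (or otherwise forced) vertices per clause gadget, exactly mirroring the $\rho_4$ argument that gave $\rho_4(G)=n+3m$. I would then verify the two directions: (i) if every clause is not-all-equal satisfiable, exhibit a paired dominating set of size exactly $2\rho_3(G)$ by selecting the literal vertices matching the satisfying assignment and pairing up the clause-gadget vertices that remain undominated; and (ii) if some clause is monochromatic under every assignment, show that any paired dominating set must use strictly more than $2\rho_3(G)$ vertices, because one of the clause sub-gadgets is left with a vertex that cannot be dominated without an additional matched pair.

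The main obstacle I anticipate is the distance bookkeeping inside the clause gadget. For $\rho_3$ the forbidden distance drops to $3$, so two vertices at distance exactly $3$ now violate the packing; this is a tighter constraint on what may sit together in a maximum packing but a looser one on adjacency chains, and it is easy for the gadget to accidentally admit a larger $3$-packing than intended (which would break the clean value $n+\alpha m$) or to make the intended packing illegal. I would therefore lengthen or reroute the connecting paths between the $c_j^\ell$ vertices and the literal vertices $u_i^\ell,\lnot u_i^\ell$ so that every cross-gadget distance is at least $4$, preserving the one-anchor-per-literal count, while keeping the internal clause structure (the $K_6$ block, the $d_j$ pendant, and the two five-vertex $b_j$ subtrees) designed so that a monochromatic clause forces exactly one surplus pair.

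Once the counting is pinned down, the forward and backward implications follow by the same ``pair each dominating vertex with a neighbour'' and ``exchange a clause-gadget vertex for an adjacent literal vertex'' manipulations used in Theorem~\ref{NPPD}, together with Proposition~\ref{p3} to supply the lower bound $\gamma_{\rm pr}(G)\ge 2\rho_3(G)$. The conclusion is that $\gamma_{\rm pr}(G)=2\rho_3(G)$ if and only if the NAE3SAT instance is satisfiable, and since the construction is polynomial in $n$ and $m$ and NAE3SAT is NP-complete, this establishes NP-hardness.
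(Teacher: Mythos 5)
Your plan misidentifies the actual difficulty and then proposes a ``fix'' that would break the reduction. First, the bookkeeping error: a $k$-packing requires pairwise distance at least $k+1$, so passing from $\rho_4$ to $\rho_3$ \emph{loosens} the constraint (distance $\geq 4$ instead of $\geq 5$). You state this correctly in your first paragraph, but your third paragraph reverses it (``a tighter constraint on what may sit together in a maximum packing''), and both of your stated worries are misplaced: the intended packing $\{a_i^1\}\cup\{b_j^1,b_j^{11},d_j^1\}$ can never ``become illegal,'' since every 4-packing is automatically a 3-packing, and the only genuine task is to rule out \emph{larger} 3-packings. That is an intra-gadget diameter check (each of the three regions of $G(C_j)$ in Fig.~\ref{fpd5} --- the two $b$-subgadgets together with their attached $c_j^{\ell}$ vertices, and the $K_6$ with $d_j^1,d_j^2$ --- has diameter at most 3, and each $G(u_i)$ has diameter 2, so a 3-packing takes at most one vertex per region), and you never carry it out. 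Second, and more seriously, the modification you propose --- lengthening or rerouting the edges between the $c_j^\ell$ vertices and the literal vertices so that ``every cross-gadget distance is at least 4'' --- destroys the reduction: those adjacencies are exactly what lets the true-literal vertices dominate the clause vertices. Subdivide them and, in the NAE-satisfiable case, the true-literal pairs plus the prescribed gadget pairs no longer dominate the $c_j^\ell$ (nor the new subdivision vertices), so $\gamma_{\rm pr}(G)$ no longer equals $2n+6m$ and the equivalence collapses on both sides. The cross-gadget distances needed no repair in the first place: $b_j^1$, $b_j^{11}$, $d_j^1$ are already at distance at least 5 from every $a_i$-vertex.

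For contrast, the paper does not adapt the $\rho_4$ gadget at all; it builds a much simpler graph: keep $G(u_i)$ as in Fig.~\ref{fpd4}, and for each clause add only \emph{two} vertices, $v_j$ joined to $x^1,x^2,y^1,y^2,z^1,z^2$ and $w_j$ joined to the six negated-literal vertices. Then $\rho_3(G)=n$ (every vertex of a literal gadget is within distance 2 of every other vertex of that gadget and of any $v_j,w_j$ attached to it, and $v_j,w_j$ are within distance 3 of each other through an edge $x^1\lnot x^1$), Proposition~\ref{p3} gives $\gamma_{\rm pr}(G)\geq 2\rho_3(G)=2n$, and equality holds precisely when every clause has both a true and a false literal. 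If you insist on recycling the construction of Theorem~\ref{NPPD}, the workable route is the opposite of what you wrote: leave the graph untouched and verify, via the diameter argument above, that $\rho_3(G)=\rho_4(G)=n+3m$; then the remainder of that proof transfers verbatim. As written, however, your proposal defers the one computation that matters and replaces it with a modification that would invalidate the theorem you are trying to prove.
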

\begin{proof}
We describe a reduction from NAE3SAT to the considered problem. Given an instance $E$, the set of literals $U=\{u_1,u_2,\ldots,u_n\}$ and the set of clauses $C=\{c_1,c_2,\ldots,c_m\}$ of NAE3SAT, we construct a graph $G$ whose order is polynomially bounded in terms of $n$ and $m$ such that the three values in each clause are not all equal to each other if and only if $\gamma_{\rm pr}(G) =2\rho_3(G)$.

For each literal $u_i$, we construct a copy of the graph $G(u_i)$, as in Fig.~\ref{fpd4}. For every clause $C_j$, we create vertices $v_j$ and $w_j$. For every clause $C_j$ with literals $x, y$ and $z$, we create the six edges $v_jx^1, v_jx^2, v_jy^1, v_jy^2, v_jz^1$ and $v_jz^2$ and another six edges $w_j\lnot x^1, w_j\lnot x^2, w_j\lnot y^1, w_j\lnot y^2, w_j\lnot z^1$ and $w_j^3\lnot z^2$. If, for example, $C_1 = \lnot u_1 \lor u_2 \lor  u_3$, then these edges are $v_1u_1^1, v_1u_1^2, v_1u_2^1, v_1u_2^2, v_1u_3^1, v_1u_3^2,$  $w_1\lnot u_1^1, w_1\lnot u_1^2, w_1\lnot u_2^1, w_1\lnot u_2^2, w_1\lnot u_3^1$ and $w_1\lnot u_3^2$. This completes the description of $G$. The obtained graph has $6n+2m$ vertices and $8n+12m$ edges.

Let $R$ be a maximum $\rho_3(G)$-set. Without loss of generality we may assume that each vertex $a_i^1$ belongs to $R$ for $i=1,2,\dots, n$. Therefore, $\rho_3(G)\geq n$ and by Proposition~\ref{p3}, $\gamma_{\rm pr}(G)\geq 2\rho_3(G)\geq 2n$. On the other hand, by the construction of $G$ it is not possible to obtain a 3-packing set of cardinality greater than $n$. Therefore, $\rho_3(G)=n$.

First, assume that there exists an instance $E$ such that the three values in each clause are not all equal to each other and consider a satisfying truth assignment. By the construction of $G(u_i)$, every minimum paired dominating set contains at least two vertices of this subgraph. Let $D$ be a paired-dominating set of $G$ such that it contains vertices of $G(u_i)$ that correspond to true literals. For example, if $u_1$ is true in $E$, then $u_1^1, u_1^2\in D$ and if $\lnot u_1$ is true, then $\lnot u_1^1, \lnot u_1^2\in D$. Since the three values in each clause are not all equal to each other, both $v_j$ and $w_j$ are dominated. Therefore, if there exists an instance such that the three values in each clause are not all equal to each other, then $\gamma_{\rm pr}(G)\leq 2n$ implying that $\gp(G)=\rho_3(G)$. 

Assume now that in each instance $E$ there exists a clause in which all three values are equal to each other. Let $C_k$ be such a clause and let $D$ be a minimum paired-dominating set of $G$. Then at least two vertices of each $G(u_i)$ belong to $D$. Let $D$ contain vertices of $G(u_i)$ corresponding to true literals. Then either $v_k$ or $w_k$ is not dominated by true literal vertices. In this situation we need at least two more vertices to dominate $\{v_k\}\cup\{w_k\}$. This implies that $|D|> 2n$. Hence, if in each instance $E$ there exists a clause in which all three values are equal to each other, then $\gamma_{\rm pr}(G)>2n$ and therefore, $\gp(G)>\rho_3(G)$.

\end{proof}

\section{Concluding remarks and open problems}
In this paper, we provide a simple characterization of trees with $\gamma_{pr}(G)=2\gamma(G)$. In addition, we derive NP-hardness results for $\gamma_{pr}(G)=2\gamma(G)$ bipartite graphs, $\gamma_{pr}(G)=\gamma_{t}(G)$  bipartite graphs, $\gamma_{pr}(G)=2\rho_{4}(G)$ graphs, and $\gamma_{pr}(G)=2\rho_{3}(G)$ graphs.

We  remark that $\gamma_{\rm pr}(G)= 2\rho_3(G)$ is not true for block graphs and unicyclic graphs. In fact, the ratio is unbounded in block graphs. Consider a complete graph $K_p$ and add one pendant edge to every vertex of $K_p$. For unicyclic graphs, take a $C_3$ and add pendant edges to every vertex. This renders the extension of our characterization for $\gamma_{pr}(G)=2\gamma(G)$ trees to block graphs and unicyclic graphs difficult. 

Studying inequality chains involving the paired domination number arise many interesting questions which are worth to study. We finish the paper with some open problems.
\begin{enumerate}
\item Characterize $(\gp,2\gamma)$-graphs for graph classes other than trees.
\item Characterize $(\gp,2\rho_3)$-graphs for graph classes other than trees.
\item Study the  $(\gp,2\rho_4)$-graphs. To the best of our knowledge, this question is completely unexplored in the literature.
\end{enumerate}

\end{document}